\newcommand{\Rset}{\mathbb{R}}
\newcommand{\Xset}{\mathbb{X}}
\newcommand{\Uset}{\mathbb{U}}
\newcommand{\SetOfClauses}{\mathcal{C}}
\newtheorem{thm}{Theorem}
\newtheorem{lem}{Lemma}
\newtheorem{cor}{Corollary}
\begin{document}

\title{Recoverable  robust shortest path problem under interval budgeted uncertainty representations} 

\author[1]{Marcel Jackiewicz}
\author[1]{Adam Kasperski\footnote{Corresponding author}}
\author[1]{Pawe{\l} Zieli\'nski}

\affil[1]{
Wroc{\l}aw  University of Science and Technology, Wroc{\l}aw, Poland\\
            \texttt{\{marcel.jackiewicz,adam.kasperski,pawel.zielinski\}@pwr.edu.pl}}


\maketitle

 \begin{abstract}
This paper deals with the recoverable robust shortest path problem under interval uncertainty representations. In this problem, a first-stage path is computed, which can be modified to some extent after observing changes in the cost structure. The uncertain second-stage arc costs are modeled by intervals, and the robust min-max criterion is used to compute an optimal solution.
The problem is known to be strongly NP-hard and also hard to approximate in general digraphs. However, until now its complexity for acyclic digraphs was unknown. In this paper, it is shown that the problem in acyclic digraphs can be solved in polynomial time for the traditional interval uncertainty and all natural neighborhoods known from the literature. More efficient algorithms for layered and arc series-parallel digraphs are constructed. Hardness results for general digraphs are also strengthened. Finally, some exact and approximate methods of solving the problem under interval budgeted uncertainty are proposed.
  \end{abstract}
\noindent \textbf{Keywords}: robust optimization, interval data, recovery, shortest path.

\section{Introduction}

The concept of recoverable robustness was introduced by Liebchen et al.~\cite{LLMS09}. This two-stage approach consists in computing a first-stage solution whose cost is known. In the second stage, after the uncertain costs are revealed, a limited recovery action is allowed to modify the first-stage solution. We seek a solution whose total first and second-stage cost is minimal. The recoverable robust approach can be naturally applied to the class of combinatorial optimization problems, where each solution can be represented as a subset of a finite element set. In this case, the recovery action consists in adding some elements to the first-stage solution or excluding some elements from it (see, e.g.,~\cite{SAO09, NO13,IKKKO22}). The second-stage uncertainty can be modeled in various ways. We can use the \emph{discrete uncertainty representation}
(see, e.g.,~\cite{KY97}) by simply listing all possible realizations of the second-stage costs. 
Alternatively, we can use the \emph{interval uncertainty representation} by providing an interval of possible values for each second-stage cost. This representation is convenient in applications because it requires only a nominal value and a maximum deviation from this nominal value for each uncertain parameter.
To control the amount of uncertainty and the price of robustness, budgeted versions of the interval uncertainty can be used~\cite{BS03, BS04, NO13}. A budget allows us to avoid over-conservatism of the solutions computed. Indeed, it can be unlikely that all the uncertain parameters take their worst values simultaneously. A deeper motivation for using budgets in the interval uncertainty can be found, for example, in~\cite{BS03}. 

The recoverable robust approach has recently been applied to various combinatorial optimization problems. Under the interval uncertainty representation, the problem can be solved in polynomial time for the class of matroidal problems~\cite{LPT19}, for example, the selection~\cite{KZ15b, LLW21} and minimum spanning tree~\cite{HKZ16, HKZ16a} problems. On the other hand, the problem is strongly NP-hard for the shortest path in general digraphs~\cite{B12} and assignment~\cite{FHLW21} problems. Furthermore, the former problem is also hard to approximate even if a very limited recovery action is allowed~\cite{B12}, while the latter one is $W[1]$-hard with respect to the recovery parameter~\cite{FHLW21}. In~\cite{CG15b, GLW21}, the recoverable robust version of the strongly NP-hard traveling salesperson problem under interval uncertainty has been investigated. Several approximation algorithms for some special cases of this problem were proposed. In~\cite{BG22}, the recoverable robust version of a single-machine scheduling problem with interval job processing times was discussed. Some approximation algorithms for this problem were proposed.

Adding budgets to the interval uncertainty makes solving the recoverable robust problem challenging. Polynomial-time algorithms are known only for some very special cases, such as the selection problem under the 
\emph{continuous budgeted uncertainty}~\cite{NO13}, constructed in~\cite{CGKZ18}.  The complexity of the analogous problem for the minimum spanning tree remains open. For the discrete budgeted uncertainty~\cite{BS04}, a compact mixed integer programming model for the recoverable robust selection problem can be constructed~\cite{CGKZ18}. For the analogous recoverable robust version of the minimum spanning tree problem, computing a worst scenario for a given first-stage solution (the adversarial problem) is already strongly NP-hard~\cite{NO13}. In~\cite{BGKK19, BKK11}, the recoverable robust knapsack problem with
 the budgeted uncertainty in the item weights was investigated. A compact mixed integer programming formulation for this problem was presented. Generally, the recoverable robust combinatorial problems
under budgeted interval uncertainty can be solved using a row generation algorithm described in~\cite{HKZ19, ZZ13}. However, this algorithm is efficient only for problems of small size. Designing more efficient methods for particular problems is still very challenging.

In this paper, we focus on the recoverable, robust version of the shortest path problem. This problem was first investigated in~\cite{B11, B12}, and a closely related problem, called the incremental one, was also previously discussed in~\cite{SAO09}. In the traditional shortest path problem, we are given a directed graph with arc costs, and we seek a simple shortest path between two specified nodes. This well-known network problem can be solved in polynomial time in general digraphs, assuming that the arc costs are nonnegative, and in acyclic digraphs for arbitrary costs (see, e.g.,~\cite{AMO93}). The complexity of its recoverable robust version can be much worse. The paths resulting from a recovery action form the so-called neighborhood of a given first-stage path. It turns out that even computing an optimal recovery action for a given second-stage scenario is strongly NP-hard for some natural neighborhoods~\cite{SAO09, NO13}.
Furthermore, computing the optimal first and second-stage paths is strongly NP-hard for all natural neighborhoods~\cite{B12,NO13}.  This is, however, the case in general digraphs. Until now, the complexity of the problem for acyclic digraphs has been open. In this paper, we close this gap and show that the recoverable robust shortest path problem in acyclic multidigraphs,  under interval uncertainty, can be solved in polynomial time. This fact remains valid for all neighborhoods discussed in the literature. We construct polynomial time algorithms for general acyclic digraphs and show that the running time can be improved for layered and arc series-parallel digraphs. We also strengthen the hardness results for general digraphs. Namely, we show that the recoverable robust shortest path problem is strongly NP-hard and hard to approximate for digraphs which are near acyclic planar ones. In the second part of the paper, we discuss the robust version of the problem with budgeted interval uncertainty. We show that the problem with the continuous budgeted uncertainty can be solved by using a compact mixed integer programming formulation. Furthermore, the polynomial time algorithms for acyclic digraphs and the interval uncertainty allow us to construct efficient approximation algorithms for both continuous and discrete budgeted uncertainty.

This paper is organized as follows. In Section~\ref{secdef}, we recall the formulation of the recoverable robust shortest path problem with various neighborhoods and several interval uncertainty representations. We also recall the definitions of the inner adversarial and incremental problems.  In Section~\ref{secinterv}, we discuss the case of the traditional interval uncertainty. We strengthen the known complexity results for general digraphs and show that the problem can be solved in polynomial time for acyclic multidigraphs. We present several polynomial time algorithms for this class of graphs. In Section~\ref{secrobust}, we discuss the recoverable robust shortest path problem with the budgeted interval uncertainty. We construct a compact mixed integer programming formulation for the continuous case while strengthening some known complexity results for the discrete case. Finally, we propose several approximation algorithms for the class of acyclic multidigraphs.

\section{Recoverable  robust shortest path problem }
\label{secdef}

In the \emph{shortest path problem}  (\textsc{SP})
we are given a \emph{multidigraph} $G=(V,A)$ that
 consists
of a finite set of nodes~$V$, $|V|=n$,  and a finite multiset of
arcs~$A$, $|A|=m$. Two nodes  $s\in V $ and $t\in V$ are
distinguished as the \emph{starting node} and the
\emph{destination} node, respectively. In particular, node~$s$ is
called \textit{source} if no arc enters $s$, and node~$t$ is called
\textit{sink} if no arc leaves~$t$. Let $\Phi$ be the set of all simple $s$-$t$ paths in~$G$. We will identify each path $X\in \Phi$ with the corresponding set of arcs that form $X$.  
  A deterministic cost is associated  with each arc of $G$, and  
we seek a simple $s$-$t$ path in $G$        
with the minimum total cost. The shortest path problem can be solved efficiently using several polynomial-time algorithms (see, e.g.,~\cite{AMO93}). 

In the \emph{recoverable robust} version of~\textsc{SP} (\textsc{Rec Rob SP} for short), 
 we are given first-stage arc costs $C_e\geq 0$, $e\in A$, which are known in advance. On the other hand, the second-stage arc costs are uncertain, and 
their particular realization $S=(c_e^S)_{e\in A}$ is called a \emph{scenario}. The set of all possible scenarios is specified as a \emph{scenario (uncertainty) set}  $\Uset\subset \Rset_{+}^{|A|}$. 
 The decision process in   \textsc{Rec Rob SP} is two-stage and consists in choosing a path~$X\in \Phi$
 in the first stage. Then, in the second stage (the recovery stage), after a scenario $S\in \Uset$ is revealed, 
 the path~$X$ can be modified to some extent. This modification consists in finding a shortest path~$Y$ under the cost scenario~$S$
 in a \emph{neighborhood} of~$X$, denoted by $\Phi(X,k)\subseteq \Phi$. This neighborhood depends on a  given
 \emph{recovery parameter}~$k$, being an integer such that $0\leq k\leq |A|$. In this paper, we examine the following variants of the neighborhood $\Phi(X,k)$ (see, e.g.,~\cite{SAO09,NO13}):
\begin{align}
\Phi^{\mathrm{incl}}(X,k)&=\{Y\in\Phi \,:\,  |Y\setminus X|\leq k\}, \label{incl}\\
\Phi^{\mathrm{excl}}(X,k)&=\{Y\in\Phi \,:\,  |X\setminus Y|\leq k\}, \label{excl}\\
\Phi^{\mathrm{sym}}(X,k)&=\{Y\in\Phi \,:\,  | (Y\setminus X) \cup (X\setminus Y) |\leq k\} \label{sym}
\end{align}
called  the \emph{arc inclusion  neighborhood} (at most $k$ new arcs can be added to $X$),  the \emph{arc exclusion  neighborhood} (at most $k$ arcs can be removed from $X$), and   the \emph{arc symmetric difference  neighborhood} (at most $k$ arcs can be different in $X$ and $Y$), respectively. 

The goal in \textsc{Rec Rob SP}  is to find a first-stage path $X\in \Phi$ and a second-stage path $Y\in \Phi(X,k)$ which minimize the total cost in a worst second-stage cost scenario $S\in \mathbb{U}$.
 Therefore, the problem can be stated as follows:
 \begin{equation}
\textsc{Rec Rob SP} : \; \min_{X\in \Phi} \left(\sum_{e\in X} C_e + \max_{S\in \Uset}\min_{Y\in \Phi(X,k)} \sum_{e\in Y} c_e^S\right).
\label{rrsp}
\end{equation}

The \textsc{Rec Rob SP}  problem contains two inner problems.
The first one is  the \emph{adversarial problem}, in which an  \emph{adversary} wants to find, for  a given first-stage path~$X\in \Phi$, a scenario
that  leads to the greatest increase in  the cost of a shortest path from~$\Phi(X,k)$:
\begin{equation}
\textsc{Adv SP}: \; \max_{S\in \Uset}\min_{Y\in \Phi(X,k)} \sum_{e\in Y} c_e^S.
\label{adv}
\end{equation}
The \textsc{Rec Rob SP}  problem reduces to \textsc{Adv SP} when we set $C_e=0$ if $e\in X$ and $C_e=M$ otherwise,
where $M$ is a sufficiently large number.
Indeed, the fixed  path $X$ is then the only reasonable choice in the first stage.

In the second inner problem, called the \emph{incremental problem}, the goal is to make some modifications of~$X$ that consist in finding a cheapest path~$Y\in \Phi(X,k)$ under the cost scenario~$S$
in order to adjust~$X$ to the cost realization:
\begin{equation}
\textsc{Inc SP}: \; \min_{Y\in \Phi(X,k)} \sum_{e\in Y} c_e^{S},
\label{inc}
\end{equation}
where  path~$X\in \Phi$ is given, and the uncertain second-stage arc costs are realized in the form of the scenario~$S$.
This problem models the decision making in the second stage.
The \textsc{Rec Rob SP}  problem reduces to \textsc{Inc SP} if
$\Uset=\{S\}$, $C_e=0$ if $e\in X$ and $C_e=M$, otherwise. Observe that the hardness of the adversarial (or incremental) problem immediately implies the hardness of \textsc{Rec Rob SP}.

Let  $\hat{c}_e$ be a \emph{nominal} second-stage cost of arc $e\in A$ and let $\Delta_e\geq 0$ be the maximum 
deviation of the second-stage cost of~$e$ from its nominal value. In this paper, we use the \emph{interval uncertainty}, so we assume that $c_e^S\in [\hat{c}_e, \hat{c}_e+\Delta_e]$ for each $e\in A$.
We will consider three particular cases of the interval uncertainty,  namely $\Uset\in \{\mathcal{U}, \mathcal{U}(\Gamma^d), \mathcal{U}(\Gamma^c)\}$, where

\begin{align}
& \mathcal{U} =\{S=(c_e^S)_{e\in A}\,:\, c_e^S\in [\hat{c}_e, \hat{c}_e+\Delta_e], e\in A\},
\label{intset}\\
& \mathcal{U}(\Gamma^d)=\{S=(c_e^S)_{e\in A}\in \mathcal{U} \,:\, |\{e\in A\,:\, c_e^S>\hat{c}_e\}|\leq \Gamma^d \},
\label{intsetgd}\\
& \mathcal{U}(\Gamma^c)=\{S=(c_e^S)_{e\in A}\in \mathcal{U} \,:\, \sum_{e\in A} (c_e^S-\hat{c}_e) \leq \Gamma^c\}.
\label{intsetgc}
\end{align}
Notice that $\mathcal{U}$ is the traditional interval uncertainty set, being the Cartesian product of the uncertainty intervals. The sets $\mathcal{U}(\Gamma^d)$ and $\mathcal{U}(\Gamma^c)$ are budgeted versions of $\mathcal{U}$, where  $\mathcal{U}(\Gamma^d)$ is called \emph{discrete budgeted uncertainty}~\cite{BS03,BS04} and 
$\mathcal{U}(\Gamma^c)$ is 
called \emph{continuous budgeted uncertainty}~\cite{NO13}. 
The parameters $\Gamma^d\in  \{0,\ldots,|A|\}$  and $\Gamma^c\in \Rset_+$ are called \emph{budgets} and allow us to control the amount of uncertainty in $\mathcal{U}$. The parameter $\Gamma^d$ limits the number of second-stage costs that can deviate from their nominal values. On the other hand, the parameter $\Gamma^c$ limits the total deviation of the second-stage costs from their nominal values. 
In the following we will use $\Uset\in \{\mathcal{U}, \mathcal{U}(\Gamma^d), \mathcal{U}(\Gamma^c)\}$ in~(\ref{rrsp})--(\ref{inc}). Observe that $\mathcal{U}$ is a special case of $\mathcal{U}(\Gamma^d)$ and $\mathcal{U}(\Gamma^c)$ when the budgets are sufficiently large.

If the recovery parameter $k=0$, then the first and the second-stage paths must be identical.
The \textsc{Rec Rob SP}  problem is then equivalent to the min-max shortest path problem (\textsc{MinMax SP}) in which we seek a path minimizing the largest cost in scenario sets $\mathcal{U}'$, $\mathcal{U}'(\Gamma^c)$, $\mathcal{U}'(\Gamma^d)$, where $\hat{c}_e'=C_e+\hat{c}_e$ and  $\Delta_e'=\Delta_e$ for each $e\in A$.
The \textsc{MinMax SP} problem can be solved in polynomial time by solving a family of deterministic shortest paths problems (see, e.g.,~\cite{BS03,LK14}).
From now on, we will assume that the recovery parameter~$k\geq 1$. 

\section{The \textsc{Rec Rob SP}  problem under the interval uncertainty}
\label{secinterv}

In this section, we consider the  \textsc{Rec Rob SP}   problem under the interval uncertainty $\mathcal{U}$.
We begin by observing that, in this case,
the adversarial problem~(\ref{adv})  can be simplified.
Indeed, it is easily seen that
\[
\max_{S\in \mathcal{U}}\min_{Y\in \Phi(X,k)} \sum_{e\in Y} c_e^S=
\min_{Y\in \Phi(X,k)} \sum_{e\in Y} (\hat{c}_e+\Delta_e).
\]
 Thus,  \textsc{Adv SP} becomes the \textsc{Inc SP} problem under the upper bound scenario $(\hat{c}_e+\Delta_e)_{e\in A}$.
 Hence, the \textsc{Rec Rob SP}  under~$\mathcal{U}$ also simplifies and  is equivalent to
 the following problem:
 \begin{equation}
\textsc{Rec SP}: \; \min_{X\in \Phi}  \left(\sum_{e\in X} C_e+\min_{Y\in \Phi(X,k)} \sum_{e\in Y} \overline{c}_e\right)
=
\min_{X\in \Phi, Y\in \Phi(X,k)} \left(\sum_{e\in X} C_e +  \sum_{e\in Y} \overline{c}_e\right),
\label{rsp}
\end{equation}
 called the
\emph{recoverable   shortest path problem}, where  $\overline{c}_e$ stands for $\hat{c}_e+\Delta_e$ for each $e\in A$.
Thus, throughout this section, we will study the \textsc{Rec SP} problem instead of 
 the \textsc{Rec Rob SP}  under~$\mathcal{U}$ and the \textsc{Inc SP} problem instead of~\textsc{Adv SP}.

 Table~\ref{tab1} summarizes the results obtained later in this section. We will first strengthen the known hardness results for general digraphs. We then construct polynomial time algorithms for various classes of acyclic multidigraphs.
 \begin{table}[h]
 \begin{small}
 \caption{Summary of the results for \textsc{Rec SP} 
 under the interval uncertainty $\mathcal{U}$
 for various classes of multidigraphs and neighborhoods.}
 \label{tab1}
 \begin{center}
 \begin{tabular}{l|l|l|l}
                      \hline
                    &\multicolumn{3}{c}{Neighborhoods}\\
                     \cline{2-4}
Multidigraph& $\Phi^{\mathrm{incl}}(X,k)$ & $\Phi^{\mathrm{excl}}(X,k)$ & $\Phi^{\mathrm{sym}}(X,k)$\\ \hline \hline

General  & strongly NP-hard & strongly NP-hard  & strongly NP-hard  \\
         & not approximable & not approximable & not approximable \\
         & for $k=2$~\cite{B12}  & for $k=2$~\cite{SAO09} & for unbounded~$k$~\cite{NO13}\\ 
         \cline{2-4}
         &para-NP-hard & para-NP-hard&
         \\
         \cline{2-4}
         &  compact MIP & compact MIP &compact MIP\\ \hline
Nearly acyclic   &   strongly NP-hard &  strongly NP-hard &  \\
 planar$^*$        & not approximable & not approximable & ? \\
         & for unbounded~$k$& for unbounded~$k$& \\ \hline
General acyclic  & $O(|V|^2|A|k^2)$ & $O(|V|^2|A|k^2)$ & $O(|V|^2|A|k^3)$ \\ \hline
Layered  & $O(|A||V|+|V|^2k)$ &  $O(|A||V|+|V|^2k)$ &  $O(|A||V|+|V|^2k)$ \\ \hline
Arc series-parallel   & $O(|A| k^2)$ & $O(|A| k^2)$ & $O(|A| k^2)$   \\ \hline
\multicolumn{4}{l}{$^*$Multidigraph that becomes acyclic planar after removing $O(k)$ arcs.}
 \end{tabular}
 \end{center}
  \end{small}
  \end{table}
 
 \subsection{Recoverable  shortest path problem in general multidigraphs}
\label{srsppm}

In this section we strengthen known hardness results for \textsc{Rec SP} with the neighborhoods $\Phi^{\mathrm{incl}}(X,k)$ and $\Phi^{\mathrm{excl}}(X,k)$
and show compact mixed integer programming formulation (MIP for short) for all three considered neighborhoods.

\subsubsection{Hardness results}

We now analyze the computational complexity of the \textsc{Inc SP} and \textsc{Rec SP} problems
with the neighborhoods~$\Phi^{\mathrm{incl}}(X,k)$, $\Phi^{\mathrm{excl}}(X,k)$
and $\Phi^{\mathrm{sym}}(X,k)$. We refine the known results in this area by providing hardness results for digraphs with a simpler structure, i.e. for the digraphs near acyclic planar ones.
We call them \emph{nearly acyclic planar digraphs}.
A digraph is nearly acyclic planar if
it becomes acyclic planar after removing $O(k)$ arcs.

Consider first the neighborhood~$\Phi^{\mathrm{incl}}(X,k)$. In this case, the \textsc{Inc SP} problem in
a general digraph~$G=(V,A)$ can be solved in $O(k|A|)$ time~\cite{SAO09}, while in~\cite[Corollary~4]{B12} B{\"u}sing proved that the \textsc{Rec SP} problem is strongly 
NP-hard and not approximable in a general digraph unless $\text{P}=\text{NP}$, even
if the recovery parameter~$k=2$.  The proof shown in~\cite{B12} uses
a reduction from the 2-Vertex-Disjoint paths problem, which is strongly NP-complete in general digraphs~\cite{FHW80}.
We now propose a reduction from the $K$-Vertex-Disjoint Paths problem ($K$-\textsc{V-DP} for short) to \textsc{Rec SP},
which will allow us to extend the hardness result to nearly acyclic planar digraphs.

We start by recalling the definition of the $K$-\textsc{V-DP} problem.
We are given a digraph $G=(V,A)$ and $K$ terminal pairs $(s_i,t_i)$, where $s_i,t_i\in V$, $i\in [K]=\{1,\dots,K\}$, are  pairwise distinct vertices.
We ask whether there exist $K$ pairwise vertex-disjoint paths $\pi_1,\ldots,\pi_K$  in~$G$
such that $\pi_i$ is a simple $s_i$-$t_i$ path for each $i\in [K]$.
Naves and Seb\H{o} proved, that if $K$ is a part of the input, then $K$-\textsc{V-DP} is NP-complete even in acyclic planar digraphs \cite{NS09}.

In what follows we  denote by $G+H$
the multidigraph composed of two digraphs~$G=(V, A)$ and~$H=(V_H, A_H)$,
such that $G + H$ has the set of nodes $V$ and the set of arcs containing all the arcs from $A$ and $A_H$.

\begin{lem}\label{lem:kvdp-recsp-reduction}
There exists a polynomial-time reduction from the instance of $K$-\textsc{V-DP} problem
in an acyclic planar digraph $G$ -- constructed based on \textsc{Planar 3-Sat} 
in~\cite[Theorem 4.8]{NS09} -- to \textsc{Rec SP} with $\Phi^{\mathrm{incl}}(X, k)$ 
in a digraph $G + H$. The costs satisfy 
$(C_e, \overline{c}_e) \in \{(0,0), (1,0), (0,1)\}$ for all arcs $e$ in $G + H$, 
where the digraph $H$ is a simple path containing $2k - 1$ arcs 
(i.e. $|A_H| = 2k - 1$). Moreover, the instance of $K$-\textsc{V-DP} is a 
Yes-instance if and only if 
 the total cost of any optimal solution to the
corresponding instance of~\textsc{Rec SP} is zero. 
\label{lcrecsp}
\end{lem}
\begin{proof}
The reduction presented here extends the construction given in~\cite[Theorem 4.8]{NS09}, 
where the \textsc{Planar 3-Sat} problem is reduced to $K$-\textsc{V-DP}. We use 
the same intermediate graph construction and extend it to a graph in \textsc{Rec SP}, thereby 
obtaining a direct reduction from \textsc{Planar 3-Sat} to \textsc{Rec SP}. For 
the sake of completeness, we briefly outline the reduction from 
\textsc{Planar 3-Sat} to $K$-\textsc{V-DP} below.

Let $\phi$ be a Boolean formula in conjunctive normal form, where every clause consists of 
three literals, corresponding to an $(\Xset, \SetOfClauses)$ instance of \textsc{Planar 3-Sat}. 
Here, $\Xset$ is the set of Boolean variables taking values from $\{0, 1\}$, and $\SetOfClauses$ 
is the set of clauses.
We assume each variable~$x \in \Xset$ occurs in exactly three clauses and exactly once as a negative literal
(the problem remains NP-complete \cite{MiddendorfPfeiffer1993}).
We construct a graph $G^{\phi}$ for the instance of $K$-\textsc{V-DP} as follows.
For each clause $C_j \in \SetOfClauses$, the graph $G^{\phi}$ contains a clause gadget 
with three terminal pairs $(s_i, t_i)$ for $i \in [3]$, and three distinguished 
vertices: $v_x^1$, $v_x^2$, and $v_x^3$. For each variable $x_i \in \Xset$, there is 
a variable gadget with a single terminal pair $(s_1, t_1)$ and three distinguished 
vertices: $v_c^1$, $v_c^2$, and $v_c^3$.
We show these gadgets in Figure~\ref{fig:planar-kvdp-gadgets}.
\begin{figure}[htbp]
\begin{center}
\includegraphics[width=0.7\textwidth]{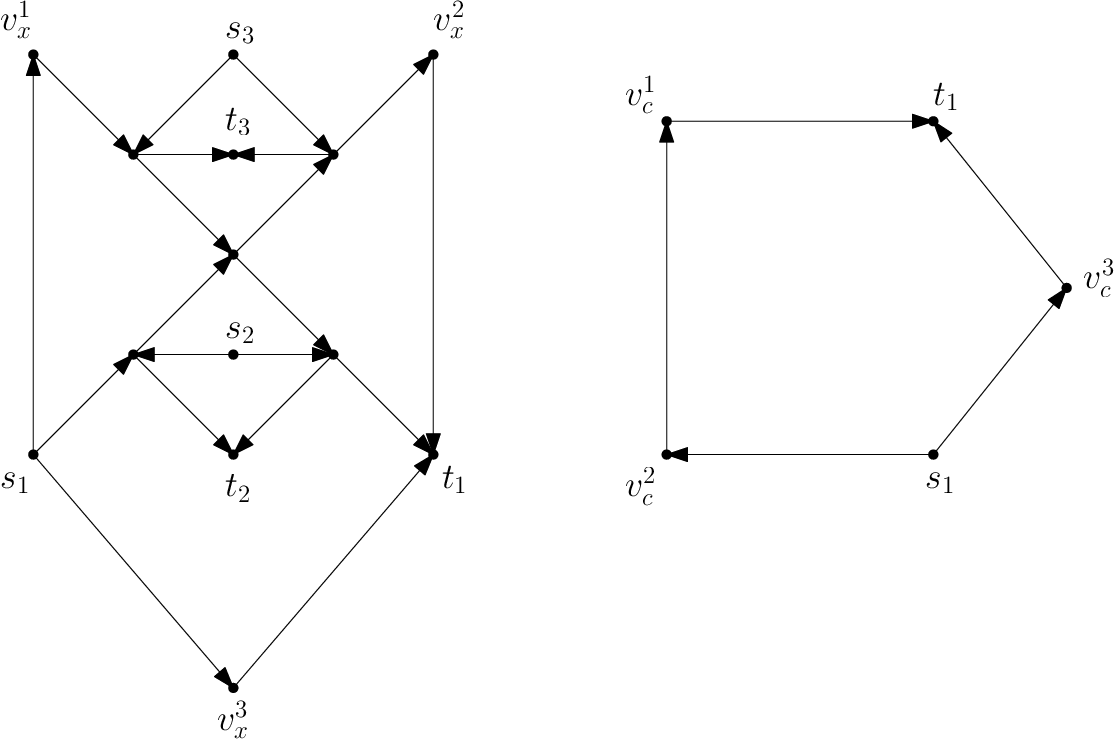}
\caption{
    The clause gadget (left) and variable gadget (right)
    used in the reduction from \textsc{Planar 3-Sat} to $K$-\textsc{V-DP}
    given in~\cite{NS09}.
}
\label{fig:planar-kvdp-gadgets}
\end{center}
\end{figure}
The clause and variable gadgets are connected by identifying their distinguished vertices 
according to the literals occurring in $\phi$,
 i.e.
if the positive literal of variable $x_i$ occurs in clause $C_j$, then either $v_c^1$ 
or $v_c^2$ is identified with one of the distinguished vertices in the clause gadget 
for $C_j$. If the negative literal occurs instead, then $v_c^3$ is identified.
An assignment of~$\Xset$ variables is encoded in the variable gadgets by the~$s_1$-$t_1$ paths.
If the~$s_1$-$t_1$ path in the~$x_i$ variable gadget traverses the~$v_c^{1}$ and~$v_c^{2}$ vertices,
then we have~$x_i = 0$.
Otherwise, it must traverse the~$v_c^{3}$ vertex, and we set~$x_i = 1$.
This finishes the construction of~$G^{\phi}$.
The graph $G^{\phi}$ is planar and  acyclic, which follows directly from the properties 
of the \textsc{Planar 3-Sat} instance and the structure of the gadgets. Furthermore, 
$G^{\phi}$ is acyclic; this is ensured by enforcing an ordering on the variables 
and identifying the vertices of different variable gadgets according to this ordering (see~\cite{NS09} for details).
A set of vertex-disjoint paths in $G^{\phi}$ must be found for all the terminal pairs across all the gadgets.
Thus $K=3|\SetOfClauses|+|\Xset|$. Consequently,
the formula~$\phi$ is satisfiable if and only if~$K$ vertex-disjoint paths exist in~$G^{\phi}$ (\cite[Theorem 4.8]{NS09}).

We now extend the graph $G^{\phi}$ by introducing additional gadgets; for now, 
we consider only their solid arcs, as shown in Figure~\ref{fig:rec-nearly-acyclic-planar-gadgets}. 
The solid arcs in the clause 
gadget (left) and the variable gadget (right) form the  gadgets from the 
aforementioned reduction from \textsc{Planar 3-Sat} to $K$-\textsc{V-DP}.
We also introduce a third type 
of gadget, referred to as the \emph{literal} gadget, which is depicted in the 
middle of Figure~\ref{fig:rec-nearly-acyclic-planar-gadgets} (the solid arcs). Similar to the 
clause gadget, this gadget has three terminal pairs $(s_i, t_i)$ for $i \in [3]$, 
but contains only two distinguished vertices, $l_1$ and $l_2$.
\begin{figure}[h!]
\begin{center}
\includegraphics[width=\textwidth]{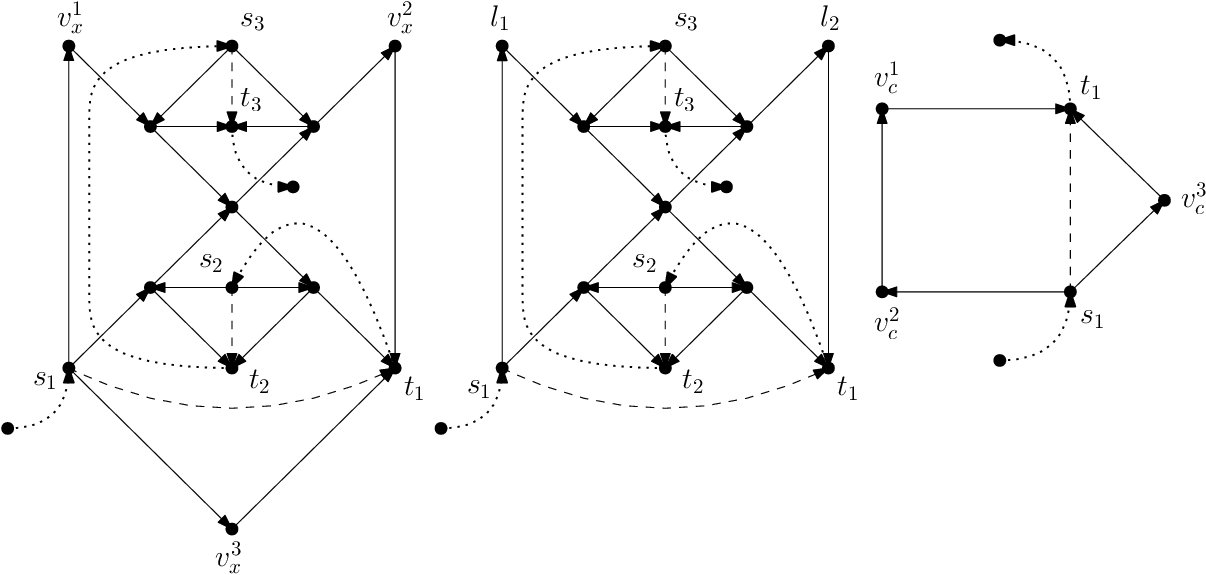}
\caption{
    The clause (left), literal (center) and variable (right) gadgets used in the reduction from \textsc{Planar 3-Sat} to \textsc{Rec SP}.
    The subgraph composed of the solid arcs of the clause and variable gadgets is used 
in the reduction from \textsc{Planar 3-Sat} to $K$-\textsc{V-DP} given in~\cite{NS09}.
}
\label{fig:rec-nearly-acyclic-planar-gadgets}
\end{center}
\end{figure}

The literal gadgets are required to ensure that the paths in a solution to \textsc{Rec SP} 
have a structure implying the existence of $K$ vertex-disjoint paths in $G^{\phi}$, 
which in turn implies a solution to the \textsc{Planar 3-Sat} instance.
These gadgets  are added to $G^{\phi}$ by 
expanding the distinguished vertices $v_x^1$, $v_x^2$, and $v_x^3$ in each clause gadget, 
which were previously identified with the vertices $v_c^1$, $v_c^2$, and $v_c^3$ of the 
corresponding variable gadgets. This expansion consists of identifying the vertex $l_1$ 
of the literal gadget with the distinguished vertex of the clause gadget, and $l_2$ with 
the vertex of the variable gadget. For example, if $v_x^1$ was identified with $v_c^3$ 
in $G^{\phi}$ for a given clause and variable gadget, we expand this connection by 
inserting a new literal gadget in its place, identifying $v_x^1$ with $l_1$ and $v_c^3$ with $l_2$.
Let $G = (V, A)$ denote the graph $G^{\phi}$ after adding the literal gadgets 
using only their solid arcs. 
It is clear that $G$ remains a planar acyclic graph.
Most importantly, an assignment of the variables in $\Xset$ that satisfies $\phi$ 
(if one exists), which is encoded by vertex-disjoint paths in $G^{\phi}$, 
is also encoded in~$G$.
This is based on the following property of the literal gadgets: if the vertex $l_1$ 
is traversed by the $s_1$-$t_1$ path of the  incident clause gadget, then the 
$s_1$-$t_1$ path in the literal gadget can be completed by traversing the vertex $l_2$. 
This, in turn, prevents the $s_1$-$t_1$ path in the incident variable gadget from traversing $l_2$. 
Consequently, $G$ contains an embedded $K$-\textsc{V-DP} instance corresponding 
to the initial \textsc{Planar 3-Sat} formula, where $K = 3|\SetOfClauses| + |\Xset|$. 
At the same time, the graph $G$ constructed in this manner enables the 
extension of the reduction to \textsc{Rec SP}.

We now describe how the dashed and dotted arcs, shown in 
Figure~\ref{fig:rec-nearly-acyclic-planar-gadgets}, are added to $G$; 
these arcs together form a subgraph denoted by $H = (V_H, A_H)$.
The dashed arcs connect all the terminal pairs across all gadgets in $G$. Specifically, 
this set of arcs, denoted by $A_{H_1}$, contains  $(s_i, t_i)$ arc for each $i \in [3]$ 
in every clause and literal gadget, as well as  $(s_1, t_1)$ arc in each variable gadget. 
The set $V_{H_1}$ consists of the terminals incident to these arcs, which together 
form the subgraph $H_1 = (V_{H_1}, A_{H_1})$.
Now we are ready to add the dotted arcs to $G$. For this purpose, 
we assume arbitrary but fixed orderings of the clause, literal, 
and variable gadgets.
We start with the first clause gadget according to the assumed ordering. 
We connect terminal $t_1$ to $s_2$, and $t_2$ to $s_3$, by adding the arcs 
$(t_1, s_2)$ and $(t_2, s_3)$, respectively. Then, we link this gadget 
to the subsequent clause gadget in the ordering by adding an arc from $t_3$ 
to the terminal $s_1$ of that next gadget. This process is repeated for all 
successive clause gadgets. Finally, the last clause gadget is connected to 
the first literal gadget by an arc from $t_3$ of the clause gadget to $s_1$ 
of the literal gadget. 
In a similar manner, within the first literal gadget, we connect terminal $t_1$ 
to $s_2$, and $t_2$ to $s_3$, by adding the arcs $(t_1, s_2)$ and $(t_2, s_3)$, 
respectively. We then link this gadget to the subsequent literal gadget in the 
ordering by adding an arc from $t_3$ to the terminal $s_1$ of that next gadget. 
This process is repeated for all successive literal gadgets. Finally, the last 
literal gadget is connected to the first variable gadget by an arc from $t_3$ 
of the literal gadget to $s_1$ of the variable gadget.
Then, we link this variable gadget to the subsequent variable gadget in the 
ordering by adding an arc from $t_1$ to the terminal $s_1$ of that next gadget. 
The set of these newly added dotted arcs is denoted by $A_{H_2}$, and the set of 
their incident terminals by $V_{H_2}$, which together form the subgraph 
$H_2 = (V_{H_2}, A_{H_2})$. Clearly, $V_{H_1} = V_{H_2}$. 
Finally, we define the source and sink by setting $s = s_1$, where $s_1$ 
is the terminal in the first clause gadget, and $t = t_1$, where $t_1$ is the 
terminal in the last variable gadget. Thus, $H = H_1 + H_2$ forms a simple 
$s$-$t$ path traversing all the clause gadgets, followed by all the literal gadgets, 
and then all the variable gadgets, strictly respecting the fixed orderings.
This completes the construction of the graph $G+H$.
Adding the arcs from $A_H$ introduces cycles to the structure. Moreover, 
the resulting graph $G + H$ is no longer necessarily planar.
Since there are $3|\SetOfClauses|$ literal gadgets in $G$, we have 
$|A_{H_1}| = 12|\SetOfClauses| + |\Xset|$, $|A_{H_2}| = 12|\SetOfClauses| + |\Xset| - 1$, 
and $|A_{H}| = 24|\SetOfClauses| + 2|\Xset| - 1$.

As the last step of the construction,
we assign the first and second-stage costs to the arcs in $G + H$:
\begin{equation}
\label{eq:rec-almost-acyclic-planar-cost-structure-incl}
    (C_e, \overline{c}_e) =
    \begin{cases}
        (0, 1) & \text{if } e \in A, \\
        (1, 0) & \text{if } e \in A_{H_1}, \\
        (0, 0) & \text{if } e \in A_{H_2}.
    \end{cases}
\end{equation}
We also set the recovery parameter to $k = |A_{H_1}| = 12|\SetOfClauses| + |\Xset|$. 
This completes the construction of the instance of \textsc{Rec SP}. 
Since the graph $G$ is planar and acyclic, and $|A_H| = 2k - 1 = O(k)$, 
the combined graph $G + H$ is nearly acyclic planar.

We now prove that there is a satisfying assignment for $\phi$ if and only if 
there exists a solution $(X^*, Y^*)$ to the constructed \textsc{Rec SP} instance 
with a total cost of $\sum_{e \in X^*} C_e + \sum_{e \in Y^*} \overline{c}_e = 0$.

($\implies$) Assume that $\phi$ is satisfiable, and let $\mathbf{x}$ be a 
satisfying assignment. It follows that within the graph $G$, there exist $K = 3|\SetOfClauses| + |\Xset|$
vertex-disjoint paths from $s_i$ to $t_i$ for $i \in [3]$ in the clause gadgets, 
as well as vertex-disjoint $s_1$-$t_1$ paths in the variable gadgets. This 
exploits the fact that $G$ embeds an instance of $K$-\textsc{V-DP} in which these 
$K$ vertex-disjoint paths exist if and only if $\phi$ is satisfiable.
Based on the properties of the literal gadgets, we can now complete the 
vertex-disjoint paths from $s_i$ to $t_i$ for $i \in [3]$ across all the literal 
gadgets. Namely, note that either~$l_1$ or~$l_2$ vertex is 
untraversed by the vertex-disjoint paths in the incident clause and variable 
gadgets, we can then use this free vertex to complete the $s_1$-$t_1$ path. 
Consequently, the remaining two paths, $s_2$-$t_2$ and $s_3$-$t_3$, can always be completed.

Let $P^*$ denote the set of arcs belonging to the vertex-disjoint paths from 
all the clause, literal, and variable gadgets. We construct the first-stage path 
in \textsc{Rec SP} by combining these paths with the arcs that connect 
successive terminal pairs: $X^* = P^* \cup A_{H_2}$. Thanks to the arcs in 
$A_{H_2}$, the path $X^*$ merges all individual vertex-disjoint paths, 
leading  from vertex $s_1$ in the first clause gadget to vertex $t_1$ in the 
last variable gadget. Thus,
it is evident that $X^*$ is a simple $s$-$t$ path. By the cost assignment 
in~\eqref{eq:rec-almost-acyclic-planar-cost-structure-incl}, the first-stage 
cost $\sum_{e \in X^*} C_e$ is equal to~$0$. We define the second-stage path as 
$Y^* = A_H$; it is easy to see that $Y^*$ is also a simple $s$-$t$ path. Since the 
arcs of $A_{H_2}$ belong to both $X^*$ and $Y^*$, it follows that 
$Y^* \in \Phi^{\text{incl}}(X^*, |A_{H_1}|)$. Finally, the second-stage cost 
$\sum_{e \in Y^*} \overline{c}_e$ is also equal to~$0$.

($\impliedby$) 
Assume that there exists a solution $(X^*, Y^*)$ to the \textsc{Rec SP} instance 
in $G + H$ with a total cost of $\sum_{e \in X^*} C_e + \sum_{e \in Y^*} \overline{c}_e = 0$. 
Consequently, both $X^*$ and $Y^*$ must be simple $s$-$t$ paths in $G + H$. 
From the cost structure defined in~\eqref{eq:rec-almost-acyclic-planar-cost-structure-incl}, 
it follows that $Y^*$ is the unique second-stage path with a cost of zero; 
hence, it must be of the form $Y^* = A_H$. Furthermore, the cost structure 
combined with the neighborhood constraint $Y^* \in \Phi^{\text{incl}}(X^*, |A_{H_1}|)$ 
implies that $X^*$ must contain all arcs belonging to $A_{H_2}$. Thus, $X^*$ 
traverses all terminals within the clause, literal, and variable gadgets, 
meaning that it visits all the vertices in $V_{H_2}$. The remaining arcs of $X^*$ 
must come from $A$. Since $X^*$ is a simple $s$-$t$ path, these arcs 
necessarily form vertex-disjoint subpaths starting and ending at the 
respective terminals.
What remains to be proven is that these subpaths of $X^*$ within $G$ are 
pairwise vertex-disjoint. To this, it suffices to show that $X^*$ visits 
the gadget terminals in the correct sequential order—namely, the same order 
in which the path $Y^*$ visits them. That is, after visiting the terminal $s_i$ 
(for $i \in [3]$) in any clause or literal gadget, or $s_1$ in any variable 
gadget, the very next terminal visited by $X^*$ must be the corresponding 
paired terminal $t_i$ (or $t_1$, respectively) within the same gadget. 
It is easy to see that if $X^*$ visits the terminals in this precise sequence, 
its resulting subpaths within~$G$ are necessarily pairwise vertex-disjoint.

We now prove that this must indeed be the case for the considered path $X^*$. 
For the sake of contradiction, let us assume that $X^*$ visits the terminal 
vertices of the gadgets in $G$ in an incorrect order. Since $X^*$ contains 
all arcs of $A_{H_2}$, and the path $Y^*$ traverses all the clause gadgets, 
followed by all the literal gadgets, and finally all the variable gadgets, 
there must exist a gadget in which the path $X^*$ violates this correct 
ordering for the first time.

We need to consider three cases, depending on where this violation occurs. 
We begin with the case where the violation occurs for the first time within 
a clause gadget. In this scenario, $X^*$ leaves the gadget through one of the 
vertices $v_x^{1}$, $v_x^{2}$, or $v_x^{3}$ and subsequently reenters it through 
one of the distinguished vertices to visit the remaining unvisited terminals. 
Note that although the path could also initially leave the gadget through $t_3$, 
it would still eventually be forced to leave it again through one of the 
aforementioned vertices. 
By the construction of $G + H$, the path $X^*$ enters some incident literal 
gadget through $l_1$ and can leave it either through $t_3$ or $l_2$. In both 
cases, at least the terminal $s_1$ within this literal gadget is left unvisited. 
However, when $X^*$ eventually returns to this literal gadget to visit $s_1$, 
the only available entry point is through the terminal $s_1$ itself. Consequently, 
the path cannot subsequently exit the gadget without creating a cycle, thereby 
violating its simplicity. 
More precisely, as illustrated by the routing alternatives comprising the solid 
and dotted arcs shown in Figure~\ref{fig:rec-nearly-acyclic-planar-gadgets}, 
we consider two subcases:
\begin{itemize}
    \item If $X^*$ leaves through $t_3$ during the first visit: Upon 
    returning to visit $s_1$, the path is forced to leave the gadget through 
    $l_2$ during this second visit. As can be seen from the structure of the 
    solid and dotted arcs available to $X^*$, this routing inevitably closes a cycle with the 
    previously traversed segments within the considered  literal gadget.
    \item If $X^*$ leaves through $l_2$ during the first visit: Upon 
    returning to visit $s_1$, the path must leave the gadget through $t_3$ 
    during this second visit. Following the allowed solid and dotted arcs, this movement 
    similarly intersects the history of $X^*$ and results in a cycle, yielding 
    a contradiction in both cases.
\end{itemize}

Consider the case where the violation occurs for the first time within a literal 
gadget. Clearly, the path $X^*$ must have entered this literal gadget through 
$s_1$. It can then leave the gadget through one of three vertices: $l_1$, $l_2$, 
or $t_3$. We analyze these subcases below:
\begin{itemize}
    \item If $X^*$ leaves through $l_1$: The only available next vertex 
    the path can transition to is $t_1$ of the incident clause gadget. However, 
    this immediately creates a cycle because all clause gadgets have already been 
    properly traversed at this point, yielding a contradiction.
    \item If $X^*$ leaves through $l_2$: It becomes impossible for the 
    path to return and visit the remaining unvisited terminals within this 
    gadget. Specifically, any attempt to reenter through $l_1$ would imply that 
    a violation occurs within the incident clause gadget, which directly 
    contradicts the fact that all clause gadgets have already been properly traversed.
    \item If $X^*$ leaves through $t_3$: The path could potentially 
    reenter the gadget through $l_2$. However, if it subsequently attempts to 
    leave through $l_1$, it creates a cycle in the incident clause gadget 
    that has already been properly traversed, which again results in a contradiction.
\end{itemize}
We finally turn to the last case where the violation occurs for the first time 
within a variable gadget. This is only possible if $X^*$ enters this gadget 
through the terminal $s_1$ and leaves it prematurely through one of the vertices 
$v_c^{1}$, $v_c^{2}$, or $v_c^{3}$. The path then enters some incident literal 
gadget through its $l_2$ vertex, from which the only available transition is to 
the terminal $t_1$ of the same literal gadget. However, this immediately 
creates a cycle because all literal gadgets have already been properly traversed 
at this point, yielding a contradiction.

We have thus proved that $X^*$ visits the terminals in the correct sequential 
order -- namely, the same order in which the path $Y^*$ visits them. Consequently, 
its resulting subpaths within $G$ are necessarily pairwise vertex-disjoint. 
Using the fact that our construction extends the reduction from \textsc{Planar 3-Sat} 
to $K$-\textsc{V-DP}, we conclude that there exists a satisfying assignment to~$\phi$.
\end{proof}

We now use the reduction from Lemma \ref{lem:kvdp-recsp-reduction} to prove the hardness of \text{Rec SP} in nearly acyclic planar digraphs.

\begin{thm}
\label{trecplan}
The  \textsc{Rec SP} problem  with~$\Phi^{\mathrm{incl}}(X,k)$
in a digraph $G+H$ with  costs $(C_e,\overline{c}_e)\in 
\{(0,0),(1,0),(0,1)\}$ for all arcs~$e$ in $G+H$
 is strongly NP-hard and not approximable  
unless $\text{P} = \text{NP}$, if  $G$ is a acyclic planar digraph, $k$ is  part of the input, and $H$ is a simple path having $2k-1$ arcs.
\end{thm}
The inapproximability result follows from the fact that any approximation algorithm for \textsc{Rec SP} would detect in polynomial time a solution with the cost equal to~0.
We will show in Section~\ref{srspam} that   \textsc{Rec SP} is polynomially solvable in acyclic digraphs. Theorem~\ref{trecplan} shows that it is enough to add $2k-1$~arcs to an acyclic planar digraph to make \textsc{Rec SP} computationally hard problem.

Let us look at the \textsc{Rec SP} problem from the parameterized complexity point of view, namely
we show that the decision version of  \textsc{Rec SP}, parameterized 
by the recovery parameter~$k$, is para-NP-complete.
The class para-NP is the class of
  all parameterized problems, with some parameter $\kappa$, that can be solved by a nondeterministic algorithm in 
 $f(\kappa)\cdot|\mathcal{I}|^{O(1)}$ time,
 where~$f$ is a computable function and $|\mathcal{I}|$ is the size of the instance~$\mathcal{I}$ of the problem
 (see, e.g.,~\cite{FG06}).
A \emph{slice} of a parameterized problem is its non-parameterized counterpart, obtained by setting~$\kappa$ to a constant value.
If a slice of a parameterized problem is NP-hard, then the problem is para-NP-hard.
We will use this fact in our proofs.
It is known that $\text{FPT} = \text{para-NP}$ if and only if $\text{P} = \text{NP}$, where
 the FPT class consists of all parameterized problems that can be solved in $f(\kappa)\cdot|\mathcal{I}|^{O(1)}$ time, i.e. the problems, which are \emph{fixed-parameter tractable} (see, e.g.,~\cite{FG06} for an in-depth treatment of the parameterized complexity).
Therefore,  para-NP-completeness of a parameterized problem 
  excludes the existence of algorithms for the problem whose running time could contain 
 an exponential term with respect to parameter $\kappa$ only.

\begin{thm} 
\label{thminclp}
The decision version of~\textsc{Rec SP}  with~$\Phi^{\mathrm{incl}}(X,k)$ parameterized by $k$
is para-NP-complete in general digraphs.
\end{thm}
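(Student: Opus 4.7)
The plan is to invoke the standard characterization of para-NP-completeness: a parameterized decision problem is para-NP-complete iff it is already NP-complete when the parameter is fixed to some constant value. Thus it suffices to exhibit one constant $k_0$ for which the decision version of \textsc{Rec SP} with $\Phi^{\mathrm{incl}}(X,k_0)$ is NP-complete, and then argue that the problem lies in NP for every fixed~$k$.

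For NP-membership, I would give the obvious certificate: a pair $(X,Y)$ of simple $s$-$t$ paths. Given such a pair, one can check in time polynomial in $|V|+|A|$ that $X\in\Phi$, that $Y\in\Phi$, that $|Y\setminus X|\leq k$ (which is just a set-difference count), and that $\sum_{e\in X}C_e+\sum_{e\in Y}\overline{c}_e$ is at most the given budget. Since $k$ appears only inside a cardinality constraint that is polynomially verifiable, membership in NP is independent of how $k$ is encoded.

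For NP-hardness at a fixed value of the parameter, I would fix $k=2$ and appeal directly to Lemma~\ref{lcrecsp} with $K=2$. The lemma supplies a polynomial-time reduction from $2$-\textsc{V-DP} in a general digraph $G$ to \textsc{Rec SP} with $\Phi^{\mathrm{incl}}(X,2)$ on $G+H$ (where $H$ is a simple $3$-arc path), and guarantees that the instance of $2$-\textsc{V-DP} is a yes-instance iff the corresponding \textsc{Rec SP} instance has an optimum of cost $0$; in particular, the threshold decision version of \textsc{Rec SP} with budget $0$ encodes $2$-\textsc{V-DP}. Since $2$-\textsc{V-DP} is strongly NP-complete in general digraphs by Fortune, Hopcroft and Wyllie~\cite{FHW80}, this fixes $k=2$ and transfers NP-hardness to the decision version of \textsc{Rec SP}.

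Combining the two ingredients, for the constant $k_0=2$ the decision problem is NP-complete; therefore the parameterization by $k$ is para-NP-complete. No step here should be genuinely difficult: the only subtlety is noting that Lemma~\ref{lcrecsp} is already written for arbitrary $K$ so no new combinatorial construction is required, and that the hardness transfer uses the ``iff cost $=0$'' property of the reduction, which makes the reduction to the decision version trivial (choose threshold $0$). The main conceptual point is simply recognizing that para-NP-completeness is witnessed by a single fixed~$k$, so none of the algorithmic machinery of parameterized complexity beyond the definition is needed.
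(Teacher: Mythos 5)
Your proposal is correct and follows essentially the same route as the paper: membership in para-NP plus NP-hardness of a single slice ($k=2$, obtained from the reduction from $2$-\textsc{V-DP} of Fortune--Hopcroft--Wyllie, i.e.\ Lemma~\ref{lcrecsp} with $K=2$ / B\"using's result), combined via the Flum--Grohe characterization. The only cosmetic difference is that you certify NP-membership with the pair $(X,Y)$ directly, whereas the paper certifies with $X$ alone and invokes the $O(k|A|)$ algorithm for \textsc{Inc SP}; both are valid.
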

\begin{proof}
The problem~\textsc{Rec SP} is in  NP because \textsc{Inc SP} (\textsc{Adv SP}) with~$\Phi^{\mathrm{incl}}(X,k)$
  can be solved in $O(k|A|)$ time~\cite{SAO09}.
Hence, \textsc{Rec SP} belongs to para-NP.
Since  \textsc{Rec SP} is NP-hard for constant~$k$ ($k=2$)~\cite{B12}, it is para-NP-hard
(see~\cite[Theorem 2.14]{FG06}).
Thus, it is 
para-NP-complete in general digraphs.
\end{proof}
It follows from Theorem~\ref{thminclp} that \textsc{Rec SP}  with~$\Phi^{\mathrm{incl}}(X,k)$ parameterized by $k$
is para-NP-hard in general digraphs.

We now consider the arc exclusion neighborhood~$\Phi^{\mathrm{excl}}(X,k)$.
For this case, the \textsc{Inc SP} problem is already 
 strongly 
NP-hard and not approximable in general digraphs, even if the recovery parameter $k=2$, by a reduction from the $2$-\textsc{V-DP} problem~\cite{SAO09}. 
We extend the hardness result for  \textsc{Inc SP}  to nearly acyclic planar digraphs
\begin{lem}
There exists a polynomial-time reduction from the instance of $K$-\textsc{V-DP} problem
in an acyclic planar digraph $G$ -- constructed based on \textsc{Planar 3-Sat} 
in~\cite[Theorem 4.8]{NS09} --
to  \textsc{Inc SP} with~$\Phi^{\mathrm{excl}}(X,k)$, in a digraph $G+H$
with $\overline{c}_e \in \{0,1\}$ for all arcs~$e$ in $G+H$,
where~$H$ is a simple  path having $2k-1$~arcs ($|A_H|=2k-1$).
Moreover, an instance of $K$-\textsc{V-DP} is  a Yes-instance iff the total cost of any optimal solution to the
corresponding instance of~\textsc{INC SP} is zero.
\label{lredicsex}
\end{lem}
\begin{proof}
We modify the reduction from Lemma \ref{lem:kvdp-recsp-reduction} to work for the arc exclusion neighborhood.
In this case, we invert the roles of the first and second stage paths by 
defining the neighborhood cost structure as follows:
\begin{equation}\label{eq:rec-almost-acyclic-planar-cost-structure-excl}
    (C_e, \overline{c}_e) =
    \begin{cases}
        (1, 0) \text{ if } e \in A,\\
        (0, 1) \text{ if } e \in A_{H_1},\\
        (0, 0) \text{ if } e \in A_{H_2},
    \end{cases}
\end{equation}
where we fix the first-stage solution to $X^* = H$ and set $k = |A_{H_1}|$.
It follows that $Y^*$ must contain~$A_{H_2}$ to satisfy the neighborhood constraint
and complete the subpaths between the terminal vertices using only the arcs from~$G$.
The rest of the proof proceeds the same as in Lemma~\ref{lem:kvdp-recsp-reduction}.
\end{proof}

Using the strong NP-completeness of $K$-\textsc{V-DP} for fixed $K=2$ in general digraphs, proved in~\cite{FHW80},
it has been shown in~\cite{SAO09} that \textsc{Inc SP} with $\Phi^{\mathrm{excl}}(X,k)$ is NP-hard for fixed~$k=2$ in general digraphs.
We now use the reduction from Lemma~\ref{lredicsex} to prove the hardness of  \textsc{Inc SP}  in  simpler digraphs,
when $k$ is part of the input.
\begin{thm}
\label{trecplanex}
The  \textsc{Inc SP} problem  with~$\Phi^{\mathrm{excl}}(X,k)$ in a digraph $G+H$ with  costs $\overline{c}_e\in 
\{0,1\}$ for all arcs~$e$ in $G+H$
 is strongly NP-hard and not approximable  
unless $\text{P} = \text{NP}$, if $G$ is an acyclic planar digraph, $k$ is part of the input and $H$ is a simple path having $2k-1$ arcs.
\end{thm}

We have the parameterized complexity result analogous to Theorem~\ref{thminclp} for the arc exclusion neighborhood.
\begin{thm} 
The decision version of~\textsc{Inc SP}  with~$\Phi^{\mathrm{excl}}(X,k)$ parameterized by $k$
is para-NP-complete in general digraphs.
\end{thm}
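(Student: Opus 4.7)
The plan is to mirror the argument used to establish Theorem~\ref{thminclp}. Two ingredients are needed: membership of the decision version of \textsc{Inc SP} with $\Phi^{\mathrm{excl}}(X,k)$ in NP, which automatically places it in para-NP under any parameterization, together with NP-hardness for some fixed value of the parameter~$k$, which by~\cite[Theorem~2.14]{FG06} yields para-NP-hardness.

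For NP membership I would take as certificate a simple $s$-$t$ path~$Y$ in $G$. Given such a~$Y$, one verifies in polynomial time that $Y$ is indeed a simple $s$-$t$ path, that $|X\setminus Y|\leq k$, and that $\sum_{e\in Y} c_e^{S^*}\leq T$ for the prescribed cost bound~$T$ appearing in the decision version. This check is polynomial in the input size, independently of~$k$, so the decision version of \textsc{Inc SP} with $\Phi^{\mathrm{excl}}(X,k)$ belongs to NP and, consequently, to para-NP regardless of the parameterization chosen.

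For NP-hardness at a constant value of~$k$, I would invoke Lemma~\ref{lredicsex} with $K=2$. That lemma supplies a polynomial-time reduction from $K$-\textsc{V-DP} to \textsc{Inc SP} with $\Phi^{\mathrm{excl}}(X,k)$ in which $k=K$; specializing to $K=2$ produces an instance with the constant recovery parameter $k=2$. Since $2$-\textsc{V-DP} is strongly NP-complete in general digraphs~\cite{FHW80}, the reduction certifies that \textsc{Inc SP} with $\Phi^{\mathrm{excl}}(X,k)$ is already (strongly) NP-hard for the fixed value $k=2$. Applying~\cite[Theorem~2.14]{FG06} upgrades this to para-NP-hardness of the parameterization by~$k$, and combined with membership in para-NP this delivers para-NP-completeness. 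I do not anticipate a real obstacle: the argument is a direct analogue of the proof of Theorem~\ref{thminclp}, with Lemma~\ref{lredicsex} playing the role that B{\"u}sing's reduction played in the inclusion case.
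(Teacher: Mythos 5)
Your proposal is correct and follows essentially the same route as the paper: membership in NP (hence para-NP) via the obvious certificate, plus NP-hardness at the fixed value $k=2$, upgraded to para-NP-hardness by~\cite[Theorem 2.14]{FG06}. The only cosmetic difference is that the paper cites the $k=2$ hardness directly from~\cite{SAO09}, whereas you rederive it from Lemma~\ref{lredicsex} with $K=2$ --- which the paper itself notes is exactly how that result is recovered, so the two arguments coincide.
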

\begin{proof}
Clearly~\textsc{Inc SP} is in  NP, and so
 it is in para-NP.
Since~\textsc{Inc SP} is NP-hard for $k=2$~\cite{SAO09}, it is para-NP-hard
(see~\cite[Theorem 2.14]{FG06}).
Thus, it is 
para-NP-complete in general digraphs.
\end{proof}
Finally, consider the arc symmetric difference neighborhood $\Phi^{\mathrm{sym}}(X,k)$.
It was shown in~\cite{NO13} that \textsc{Inc SP} with $\Phi^{\mathrm{sym}}(X,k)$ is strongly NP-hard if the recovery parameter~$k$ is part of the input.
A simple and straightforward modification of the reduction from~\cite{NO13} (it is enough to change some arc costs)
allows us to establish the inapproximability of \textsc{Inc SP} also for this case.
\begin{thm} 
The~\textsc{Inc SP} problem in general digraphs with~$\Phi^{\mathrm{sym}}(X,k)$ is strongly NP-hard and not approximable
unless $\text{P} = \text{NP}$ if $k$ is part of the input.
\end{thm}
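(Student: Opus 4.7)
The plan is to adapt the reduction from~\cite{NO13} that already establishes strong NP-hardness of \textsc{Inc SP} with $\Phi^{\mathrm{sym}}(X,k)$ when $k$ is part of the input, by re-weighting the arcs so that the optimal value takes only two possible forms: exactly zero when the source instance is a ``yes'' instance, and strictly positive otherwise. Once such a gap is available, inapproximability unless P=NP follows by the same argument used after Lemmas~\ref{lcrecsp} and~\ref{lredicsex}: any polynomial-time approximation algorithm with finite ratio would have to return~$0$ exactly on the yes-instances and therefore decide the underlying NP-hard problem in polynomial time.

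Concretely, the first step is to recall the structure of the reduction of~\cite{NO13}. That reduction takes an instance of a strongly NP-hard source problem, builds a digraph $G'$, a fixed first-stage path $X$, a scenario $(\overline{c}_e)_{e\in A'}$ and a value of~$k$, in such a way that deciding whether the optimum of \textsc{Inc SP} is below a chosen integer threshold $T$ solves the source problem. The arcs of the construction split naturally into two groups: those belonging to the ``canonical'' second-stage path $Y^*$ that witnesses a yes-instance, and those used to encode constraint violations.

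The second step is the re-weighting itself. Following the cue in the statement, I would replace the original costs by $\overline{c}_e=0$ on every arc of the canonical yes-witness $Y^*$ and on every arc of $X$ whose removal is mandatory in that witness, and by $\overline{c}_e=1$ on every other arc. The recovery parameter $k$ and the path $X$ stay as in~\cite{NO13}, so feasibility, i.e. $Y\in\Phi^{\mathrm{sym}}(X,k)$, is unchanged. I then need to check two implications: in a yes-instance, the canonical $Y^*$ still lies in $\Phi^{\mathrm{sym}}(X,k)$ and now has cost~$0$; in a no-instance, every $Y\in\Phi^{\mathrm{sym}}(X,k)$ must use at least one arc of weight~$1$, so the optimum is at least~$1$. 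This last point is precisely where the combinatorial content of the reduction of~\cite{NO13} is invoked, rephrased with $\{0,1\}$ weights instead of a threshold.

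The main obstacle is this second implication under the symmetric-difference neighborhood: unlike $\Phi^{\mathrm{incl}}$ and $\Phi^{\mathrm{excl}}$, both insertions and deletions contribute to the budget, so one must verify that no ``mixed'' feasible $Y$ — one not of the canonical shape anticipated by~\cite{NO13} — can sneak through using only zero-cost arcs when the source instance is negative. Handling this requires a gadget-by-gadget argument showing that every zero-cost $s$-$t$ path either coincides with $Y^*$ or violates the symmetric-difference budget, mirroring the ``cannot appear'' / ``must be present'' case analysis used in the proofs of Lemmas~\ref{lcrecsp} and~\ref{lredicsex}. With that verification in place, the strong NP-hardness and the inapproximability follow simultaneously.
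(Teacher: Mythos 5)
Your proposal matches the paper's approach exactly: the paper proves this theorem only by remarking that "a simple and straightforward modification of the reduction from~\cite{NO13} (it is enough to change some arc costs)" yields inapproximability, which is precisely the $\{0,1\}$ re-weighting and zero-vs-positive gap argument you describe. Your additional care about verifying that no mixed feasible $Y$ achieves cost zero in a no-instance under the symmetric-difference budget is the right point to check and is implicit in the paper's claim.
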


Since \textsc{Inc SP} is a special case of~\textsc{Rec SP},
all the hardness results for $\Phi^{\mathrm{excl}}(X,k)$ and $\Phi^{\mathrm{sym}}(X,k)$ remain valid for~\textsc{Rec SP} with these neighborhoods.
Hence, \textsc{Rec SP} is a computationally hard problem for all neighborhoods under consideration.
Notice, however, that the hardness results shown in this section hold for general digraphs.
In Section~\ref{srspam}, we will show that \textsc{Rec SP} can be solved in polynomial time for all the considered neighborhoods if the input graph is an acyclic multidigraph.

\subsubsection{Compact mixed integer programming formulation}
\label{secmiprec}

In this section, we will show that \textsc{Rec SP} in general multidigraphs
can be solved using a compact MIP. Let $\chi (\Phi)\subset \{0,1\}^{|A|}$ be the set of characteristic vectors of simple $s$-$t$ paths in $G=(V,A)$. In the following, we will identify each simple $s$-$t$ path $X$ with its characteristic vector $\pmb{x}\in \chi (\Phi)$.
Likewise, $\chi (\Phi(\pmb{x},k))\subset\{0,1\}^{|A|}$ is the set of characteristic vectors of simple $s$-$t$ paths in the neighborhood of $\pmb{x}\in \chi (\Phi)$. Define $d_i=1$ if $i=s$, $d_i=-1$ if $i=t$ and $d_i=0$ for $i\in V\setminus\{s,t\}$.
 The MIP formulation for \textsc{Rec~SP} takes the following form: 
 \begin{align}
  \min\; & \sum_{e\in A}  C_e x_e+ \sum_{e\in A}  \overline{c}_e y_e \label{cstx0}\\ 
   &\pmb{x}\in \chi (\Phi)\subset\{0,1\}^{|A|},\label{cstx}\\
    & \pmb{y}\in \chi (\Phi(\pmb{x},k))\subset\{0,1\}^{|A|},\label{cneigh}
\end{align}
where $\chi(\Phi)$ is described by the following constraints:
\begin{align}
  & \sum_{(i,j)\in A} x_{i,j}-\sum_{(j,i)\in A} x_{j,i}=d_i & i\in V, \label{chx0} \\
  & x_{i,s}=x_{t,i}=0 & i\in V, \label{chx1}\\
  & p_i+M x_{i,j}+1\leq p_j+M & (i,j)\in A,  \label{chx2}\\
  & p_i\in\{1,\dots,|V|\} & i\in V,  \label{chx3}\\
  & x_{i,j}\in \{0,1\} & (i,j)\in A, \label{chx4}
\end{align}
where $M$ is a sufficiently large constant.
Constraints~(\ref{chx0}) are standard mass-balance constraints~(see, e.g.,~\cite{AMO93}). Constraints (\ref{chx1})--(\ref{chx3}) must be added to ensure that $\pmb{x}$ describes a simple path in~$G$. They form a system of Miller-Tucker-Zemlin type constraints that exclude directed cycles in $\pmb{x}$~(see, e.g.~\cite{BDP16, MTZ60}). Of course, these constraints can be omitted if $G$ is acyclic and $\chi(\Phi)$ is then described only by~(\ref{chx0}) and~(\ref{chx4}). For general digraphs (\ref{chx1})--(\ref{chx3}) must be added, even if all first-stage arc costs are positive. This fact is demonstrated by an example shown in Appendix~\ref{dod}.
The set $\chi (\Phi(\pmb{x},k))$  can be modeled
by the following constraints:
 \begin{align}
          &\chi(\Phi^{\mathrm{incl}}(\pmb{x},k)):  & & \pmb{y}\in \chi (\Phi),\; \sum_{e\in A} (1-x_e)y_e\leq k,\label{cneighincl}\\
    &\chi(\Phi^{\mathrm{excl}}(\pmb{x},k)): & &  \pmb{y}\in \chi (\Phi),\; \sum_{e\in A} (1-y_e)x_e\leq k,\label{cneighexcl} \\
    &\chi(\Phi^{\mathrm{sym}}(\pmb{x},k)): & & \pmb{y}\in \chi (\Phi)\; \sum_{e\in A} ((1-x_e)y_e+(1-y_e)x_e)\leq k.\label{cneighsym}
\end{align}

By the assumption that $\Uset\subset \Rset_{+}^{|A|}$ (the second-stage arc costs are nonnegative), we can drop the anti-cycling constraints in the description of $\chi(\Phi^{\mathrm{incl}}(\pmb{x},k))$. Indeed, a solution $\pmb{y}\in \chi(\Phi^{\mathrm{incl}}(\pmb{x},k))$ contains a simple $s$-$t$ path and possibly some cycles. We can remove all such cycles by fixing some variables $y_e$ to~0, without violating the constraint~(\ref{cneighincl}). However, we cannot do this for the arc exclusion and arc symmetric difference neighborhoods (an example is shown in Appendix~\ref{dod}). Therefore, for general digraphs, the anti-cycling constraints must be present in the description of $\chi(\Phi^{\rm excl}(\pmb{x},k))$ and $\chi(\Phi^{\rm sym}(\pmb{x},k))$.
The constraints in~(\ref{cneighincl})--(\ref{cneighsym}) can be linearized.
 For  example, we can replace~(\ref{cneighincl}) with
 \begin{align}
     &\sum_{e\in A} (y_e-z_e)\leq k &\\
     &z_e\leq x_e &\forall e\in A,\label{cinter1}\\
      &z_e\leq y_e&\forall e\in A,\label{cinter2}\\
      &z_e\geq 0 &\forall e\in A. \label{cinter3}
\end{align}
In much the same way, we can linearize the constraints~(\ref{cneighexcl}) and~(\ref{cneighsym}).
We conclude that
the \textsc{Rec SP} problem with~$\Phi^{\mathrm{incl}}(X,k)$, $\Phi^{\mathrm{excl}}(X,k)$ and $\Phi^{\mathrm{sym}}(X,k)$ 
admits compact MIP formulations.

\subsection{Recoverable  shortest path problem in acyclic multidigraphs}
\label{srspam}
In this section we show that \textsc{Rec SP} with the neighborhoods $\Phi^{\mathrm{incl}}(X,k)$, $\Phi^{\mathrm{excl}}(X,k)$
and $\Phi^{\mathrm{sym}}(X,k)$
can be solved in polynomial time if the input graph $G$ is an acyclic multidigraph. Observe that we can drop the assumption about the non-negativity of arc costs for this class of graphs. We first assume that $G$ is layered, and then we will generalize the result to arbitrary acyclic multidigraphs. 

\subsubsection{Layered multidigraphs}
\label{layermult}

In a layered multidigraph $G=(V,A)$, the set of nodes $V$ is partitioned into $\ell$ disjoint subsets called layers, i.e. $V=
  V_1\cup \cdots \cup V_{\ell}$, and the arcs connect  only the nodes in successive layers, i.e.
  go only
  from nodes in $V_h$ to nodes in $V_{h+1}$ for each
  $h\in [\ell-1]$.
  Without loss of generality, we can assume $s\in V_1$ and $t\in V_{\ell}$. In layered multidigraphs all $s$-$t$ paths have the same cardinality. Therefore, for every path $X\in \Phi$ and any $k=0,\dots,\ell-1$, $\Phi^{\mathrm{incl}}(X,k)=\Phi^{\mathrm{excl}}(X,k)$. Also $\Phi^{\mathrm{sym}}(X,k)=\Phi^{\mathrm{incl}}(X,\lfloor k/2 \rfloor)$. Therefore, it is enough to construct an algorithm for the arc inclusion neighborhood $\Phi^{\mathrm{incl}}(X,k)$.
  
  In the following, we will show a polynomial transformation from \textsc{Rec SP} in a layered multidigraph to 
  the Constrained Shortest Path Problem (\textsc{CSP} for short) in an acyclic multidigraph. In the \textsc{CSP} problem, 
  we are given an acyclic multidigraph~$G=(V,A)$  with a source~$s\in V $ and a  sink~$t\in V$.
  Each arc $e\in A$ has a cost~$c_e$ and a transition time~$t_e$.  A positive total transition time limit~$T$ is specified. 
  We seek an $s$-$t$ path $\pi$ in $G$ that minimizes the
total cost, subject to not exceeding the transition time limit~$T$.
The  \textsc{CSP} is known to be weakly NP-hard (see, e.g.,~\cite{GJ79}) and can be solved in $O(|A|T)$ time, which is pseudopolynomial~\cite{H92}.

\begin{lem}
\label{lem1}
There is a polynomial time reduction from  the \textsc{Rec SP} problem with the neighborhoods $\Phi^{\mathrm{incl}}(X,k)$,
$\Phi^{\mathrm{excl}}(X,k)$
and $\Phi^{\mathrm{sym}}(X,k)$
 in  
 a layered multidigraph $G=(V,A)$ to the  \textsc{CSP} problem in an acyclic multidigraph with $T=k$.
\end{lem}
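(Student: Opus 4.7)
The plan is to build an auxiliary acyclic multidigraph whose $s$-$t$ paths are in bijection with admissible pairs $(X,Y)$ of first- and second-stage paths in $G$, in such a way that the arc costs encode the \textsc{Rec SP} objective and the transition times count $|Y\setminus X|$.

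First, I would invoke the observation already made before the lemma: in a layered multidigraph every $s$-$t$ path has the same number of arcs, so $\Phi^{\mathrm{incl}}(X,k)=\Phi^{\mathrm{excl}}(X,k)$ for all $X$ and $k$, and $\Phi^{\mathrm{sym}}(X,k)=\Phi^{\mathrm{incl}}(X,\lfloor k/2\rfloor)$. Hence it is enough to build the reduction for the arc inclusion neighborhood $\Phi^{\mathrm{incl}}(X,k)$.

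Next, I would construct a product multidigraph $G'=(V',A')$. Its node set is $V'=\bigcup_{h=1}^{H} V_h\times V_h$, with source $s'=(s,s)$ and sink $t'=(t,t)$. For every pair of arcs $e_1=(u,u')$ and $e_2=(v,v')$ of $G$ with $u,v\in V_h$ and $u',v'\in V_{h+1}$, I would introduce an arc $a=((u,v),(u',v'))$ in $A'$ and assign it cost $c_a=C_{e_1}+\overline{c}_{e_2}$ and transition time
\[
t_a=\begin{cases}0 & \text{if } e_1=e_2,\\ 1 & \text{if } e_1\neq e_2.\end{cases}
\]
Finally, I would set the transition-time limit to $T=k$. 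Since arcs only go from layer $h$ to layer $h+1$, $G'$ is acyclic, and its size is polynomial: $|V'|=O(|V|^2)$ and $|A'|=O(|A|^2)$.

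The correspondence to verify is the following. Every $s'$-$t'$ path $\pi$ in $G'$ naturally projects onto two $s$-$t$ paths $X$ and $Y$ in $G$ (obtained from the first and second coordinates of the arcs used by $\pi$), and conversely any pair $(X,Y)$ of $s$-$t$ paths in $G$ lifts to a unique $s'$-$t'$ path in $G'$ (both $X$ and $Y$ are automatically simple because they visit one node per layer). By construction the total cost of $\pi$ equals $\sum_{e\in X}C_e+\sum_{e\in Y}\overline{c}_e$, so only the constraint requires care. Here I would use the key structural property of a layered multidigraph: every arc is incident with a unique pair of consecutive layers, so an arc $e_2$ appearing in $Y$ at layer $h$ belongs to $X$ if and only if $X$ uses $e_2$ at that same layer $h$. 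Consequently the total transition time of $\pi$ is exactly the number of layers at which $X$ and $Y$ disagree, which equals $|Y\setminus X|$. Thus the constraint $\sum_{a\in\pi}t_a\le T=k$ is equivalent to $Y\in\Phi^{\mathrm{incl}}(X,k)$, and solving \textsc{CSP} on $(G',c,t,T)$ yields an optimal pair $(X,Y)$ for \textsc{Rec SP}.

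The only potentially delicate point is the identity ``number of disagreeing positions $=|Y\setminus X|$,'' which would fail in a general acyclic multidigraph where the same arc could be reached along paths of different lengths; this is why the layered assumption is essential and will be the main obstacle to lift in the subsequent generalization to arbitrary acyclic multidigraphs.
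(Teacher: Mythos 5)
Your reduction is correct, but it is a genuinely different construction from the paper's. The paper keeps $V'=V$ and replaces whole segments of the pair $(X,Y)$ between consecutive common nodes by single ``super-arcs'': an arc $e^{(0)}=(i,j)$ of transition time $0$ for a shared arc, and an arc $e^{(1)}=(i,j)$ for every reachable pair $i,j$ at layer distance at most $k$, whose cost is the sum of a precomputed shortest $i$-$j$ path under $C$ and one under $\overline{c}$ and whose transition time is the cardinality of their difference; this yields a CSP instance with only $O(|V|^2)$ arcs, which is exactly what the follow-up theorem needs to obtain the $O(|A||V|+|V|^2k)$ bound. You instead build the synchronized product graph on $\bigcup_h V_h\times V_h$, tracking both paths layer by layer. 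Your correspondence is cleaner to verify --- it is an exact bijection between $s'$-$t'$ paths and pairs $(X,Y)$, with cost and transition time matching identically rather than via ``not greater than'' inequalities, and your observation that in a layered graph an arc's layer position is fixed (so the number of disagreeing steps equals $|Y\setminus X|$) is exactly the right justification. The price is a larger instance, $|A'|=O(|A|^2)$, so plugging your reduction into the Hassin-type $O(|A'|T)$ algorithm gives $O(|A|^2k)$ rather than the paper's $O(|A||V|+|V|^2k)$; this does not affect the lemma as stated (any polynomial reduction suffices), but it would weaken the running time claimed in the theorem that follows. Your closing remark correctly identifies why the argument is tied to the layered structure; the paper's segment-contraction viewpoint is also the one it subsequently generalizes to arbitrary acyclic multidigraphs by indexing super-arcs with the number of arcs used.
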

\begin{proof}
Let $G=(V,A)$ be a layered multidigraph in an instance of \textsc{Rec SP} with a recovery parameter $k$. We will show a construction for the arc inclusion neighborhood $\Phi^{\mathrm{incl}}(X,k)$.
 We build an acyclic multidigraph $G'=(V',A')$ in the corresponding instance of CSP as follows.
 We fix $V'=V$, $s'=s$, $t'=t$, and  $A'$ contains two types of arcs labelled as $e^{(0)}$ and $e^{(1)}$. 
 Namely,
 for each pair of nodes $i\in V_{h}$ and $j\in V_{h+1}$, $h\in [\ell-1]$, such that there exists at least one arc 
 from~$i$ to~$j$ (note that $G$ can be a multidigraph), 
 we add to $A'$ arc $e^{(0)}=(i,j)$ with the cost $c_{e^{(0)}}=\min_{e=(u,v)\in A\,:\,u=i, v=j} \{C_e+\overline{c}_e\}$
 and the transition time~$t_{e^{(0)}}=0$. 
 For each pair of nodes $i\in V_g$ and $j\in V_h$, $g,h\in [\ell]$, such that the node~$j$ is reachable from~$i$ and
 $1\leq h-g\leq k$,
 we add to $A'$ arc $e^{(1)}=(i,j)$ with the cost $c_{e^{(1)}}$ being the sum
 of the cost of a shortest path~$X^*_{ij}$ from~$i$ to~$j$ under~$C_e$, $e\in A$, and 
 the cost of a shortest path~$Y^*_{ij}$ from~$i$ to~$j$ under~$\overline{c}_e$, $e\in A$, i.e.
  $c_{e^{(1)}}=\sum_{e\in X^*_{ij}}+\sum_{e\in Y^*_{ij}}\overline{c}_e$.
  The transition time is~$t_{e^{(1)}}=|Y^*_{ij}\setminus X^*_{ij}|$.
  Since $G$ is layered,
 $|X^*_{ij}|=|Y^*_{ij}|= h-g$ and thus
 $|Y^*_{ij}\setminus X^*_{ij}| \leq  h-g$.
 Finally, we set $T=k$. The graph $G'$ can be constructed in $O(|A||V|)$ time. Indeed, given a node $i\in V$, we can compute in $O(|A|)$ time the paths $X^*_{ij}$ and $Y^*_{ij}$, for all $j\in V$ reachable from $i$, using a dynamic programming algorithm (see, e.g.~\cite{AMO93}).
 This algorithm must be executed $O(|V|)$ times, which yields the overall running time $O(|A||V|)$.

 We need to show that there is a pair of  paths $X\in \Phi$ and $Y\in \Phi^{\mathrm{incl}}(X,k)$ 
 in~$G$,
 feasible to~\textsc{Rec SP}, with the cost  $\sum_{e\in X} C_e+\sum_{e\in Y}\overline{c}_e\leq UB$ 
 if and only if  there is a path~$\pi$ in~$G'$,  feasible to~\textsc{CSP},
  such that $\sum_{e\in \pi} c_e\leq UB$, where $UB\in \Rset$.
  
($\Rightarrow$) Let $X\in \Phi$ and $Y\in \Phi^{\mathrm{incl}}(X,k)$  
be a pair of paths, feasible to~\textsc{Rec SP} in~$G$,
  with the cost  $\sum_{e\in X} C_e+\sum_{e\in Y}\overline{c}_e\leq UB$.
  We form the corresponding $s$-$t$ path~$\pi$ in~$G'$ as follows.
  Let $(i_1,i_2,\dots,i_l)$, where $i_1=s$ and $i_l=t$ be the sequence of the common nodes of paths $X$ and $Y$ in the order of visiting. This sequence is the same for $X$ and $Y$ because $G$ is acyclic (an example is shown in Figure~\ref{figproof1}). For each $i=2,\dots, l$, if $e=(i_{j-1},i_j)\in X\cap Y$, then we add to~$\pi$  the arc  $e^{(0)}=(i_j,i_{j-1})\in A'$. Notice that the cost of $e^{(0)}$ is not greater than $C_e+\overline{c}_e$ and its transition time is~0.  If $e=(i_{j-1},i_j)\notin X\cap Y$, then there are two disjoint subpaths $X'$ of $X$ and $Y'$ of $Y$ from $i_{j-1}$ to $i_j$ in $G$.
  In this case, we add to~$\pi$  the arc  $e^{(1)}=(i_j,i_{j-1})\in A'$. Notice that the cost of $e^{(1)}$ is not greater than $\sum_{e\in X'} C_e+ \sum_{e\in Y'} \overline{c}_e$, because it is the sum of the costs of the shortest $i_{j-1}-i_j$ paths with respect to $C_e$ and $\overline{c}_e$.
  Also, the transition time of $e^{(1)}$ is not greater than $|Y'\setminus X'|$,
  which follows from the layered structure of the input graph~$G$ (notice that $|Y'\setminus X'|$ is equal to the number of layers between $i_{j-1}$ and $i_j$). The arc $e^{(0)}$ (resp. $e^{(1)}$) must exist in $G'$, because $i_j$ is reachable from $i_{j-1}$ in~$G$. Therefore, $\pi$ is an $s$-$t$ path in $G'$. It is also feasible because its total transition time is at most $|Y\setminus X|\leq k$.
   Finally, the total cost of $\pi$ is not greater than $\sum_{e\in X} C_e+\sum_{e\in Y}\overline{c}_e\leq UB$.
    \begin{figure}[ht]
    \centering
    \includegraphics[height=5cm,width=14cm]{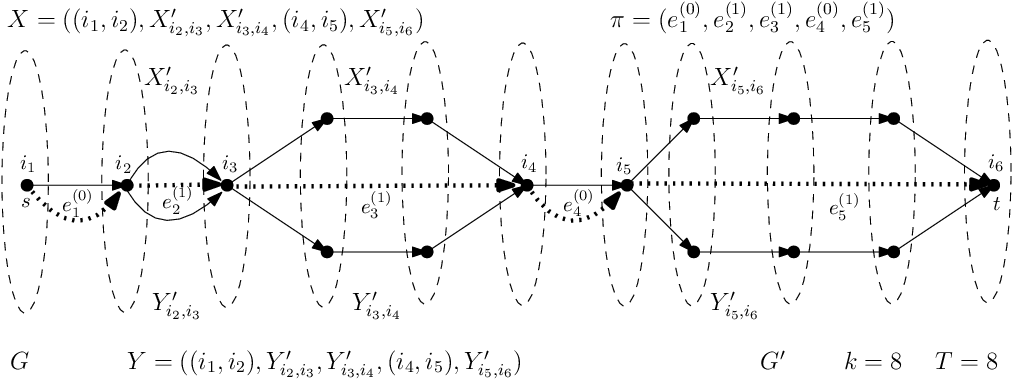}
  \caption{Illustration of the proof of Lemma~\ref{lem1}
  for $\Phi^{\mathrm{incl}}(X,k)$ and $\Phi^{\mathrm{excl}}(X,k)$. 
  The dashed circles denote the layers. The pair of paths $X$ and $Y$ in $G$ corresponds to the path 
  $\pi$ in $G'$. } \label{figproof1}
  \end{figure}
  
($\Leftarrow$) Assume that $\pi$ is an $s$-$t$ path in~$G'$, feasible to~\textsc{CSP}, with
  $\sum_{e\in \pi} c_e\leq UB$.
  For each arc $e^{(0)}=(i,j)\in \pi$, we add the arc $e=(i,j)\in A$ such that 
  $C_e+\overline{c}_e=c_{e^{(0)}}$ to~$X$ and to~$Y$.
  For each arc $e^{(1)}=(i,j)\in \pi$, we add the arcs corresponding to~$X^*_{ij}$ to~$X$ and the
  arcs corresponding to~$Y^*_{ij}$ to~$Y$. It is easily seen that
  $X\in \Phi$,  $Y\in \Phi^{\mathrm{incl}}(X,k)$  
   and 
   $UB\geq \sum_{e\in \pi} c_e= \sum_{e\in X} C_e+\sum_{e\in Y}\overline{c}_e$.
    \end{proof}
    We get the following result:
\begin{thm}
\textsc{Rec SP} with the neighborhoods $\Phi^{\mathrm{incl}}(X,k)$,
$\Phi^{\mathrm{excl}}(X,k)$
and $\Phi^{\mathrm{sym}}(X,k)$
 in  
 a layered multidigraph~$G=(V,A)$ can be solved in $O(|A||V|+|V|^2k)$ time.
\end{thm}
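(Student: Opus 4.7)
The plan is to invoke Lemma~\ref{lem1} and pair it with the known $O(|A|T)$ algorithm for \textsc{CSP} on acyclic multidigraphs, so the proof reduces to (i) arguing that the three neighborhoods all collapse to one case on a layered graph and (ii) carefully bounding both the construction time of the auxiliary graph $G'$ and its arc count, since these dominate the total running time.

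First, as already noted just before Lemma~\ref{lem1}, in a layered multidigraph every $s$-$t$ path has exactly $H-1$ arcs, so $|Y\setminus X|=|X\setminus Y|$ for any two $s$-$t$ paths $X,Y$, giving $\Phi^{\mathrm{incl}}(X,k)=\Phi^{\mathrm{excl}}(X,k)$ and $\Phi^{\mathrm{sym}}(X,k)=\Phi^{\mathrm{incl}}(X,\lfloor k/2\rfloor)$. Hence it suffices to treat $\Phi^{\mathrm{incl}}(X,k)$; the other two neighborhoods are handled by running the same algorithm with an unchanged or halved recovery parameter, which does not affect the asymptotic bound.

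Next I would apply the reduction from Lemma~\ref{lem1} and solve the resulting \textsc{CSP} instance on $G'$ by the algorithm of Hassin, which runs in $O(|A'|T)$ time. The transition budget is $T=k$, so everything comes down to estimating $|A'|$ and the cost of building $G'$. By construction $|V'|=|V|$, the arcs of type $e^{(0)}$ are one per pair of consecutive-layer endpoints connected in $G$, and the arcs of type $e^{(1)}$ are one per ordered pair $(i,j)$ with $i\in V_g$, $j\in V_h$, $1\le h-g\le k$, so $|A'|=O(|V|^2)$. Therefore the \textsc{CSP} phase costs $O(|V|^2 k)$.

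The remaining step, and the only place one has to be slightly careful, is the construction of the $e^{(1)}$ arcs: each such arc requires a shortest $i$-$j$ path in $G$ under $C_e$ and under $\overline{c}_e$. Rather than computing a separate shortest path for every pair, I would run a single-source shortest path computation in the layered acyclic graph from every node $i\in V$ under $C_e$ and under $\overline{c}_e$; each such computation takes $O(|A|)$ time because the layered structure gives a topological order for free, and the result determines both the cost $c_{e^{(1)}}$ and the transition time $t_{e^{(1)}}=|Y^*_{ij}\setminus X^*_{ij}|$ for all targets $j$ reachable from $i$ within $k$ layers. Summing over sources yields $O(|V||A|)$ for building $G'$. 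Combining this with the $O(|V|^2 k)$ for solving \textsc{CSP} gives the claimed overall bound $O(|A||V|+|V|^2 k)$. The main obstacle is really this bookkeeping, ensuring that both the $O(|V||A|)$ preprocessing and the $O(|V|^2 k)$ \textsc{CSP} solve are honest and that the equivalences between the three neighborhoods are applied before, not after, the reduction.
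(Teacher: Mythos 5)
Your proposal is correct and follows essentially the same route as the paper: reduce via Lemma~\ref{lem1}, bound $|A'|=O(|V|^2)$, solve the resulting \textsc{CSP} in $O(|A'|T)=O(|V|^2k)$ time, and account for $O(|A||V|)$ construction time. The only difference is that you spell out how the construction bound is achieved (single-source shortest paths from every node under both cost functions), a detail the paper simply asserts inside the proof of Lemma~\ref{lem1}.
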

\begin{proof}
The graph $G'$ from the proof of Lemma~\ref{lem1} can be constructed in $O(|A||V|)$ time. The number of arcs in $G'$ is $O(|V|^2)$, so the corresponding CSP problem can be solved in $O(|V|^2 k)$ time, where $k\leq |A|$. Hence the overall running time is $O(|A||V|+|V|^2k)$.
\end{proof}

\subsubsection{General acyclic multidigraphs}

If the input graph $G$ is layered, then the disjoint subpaths of $X$ and $Y$ between any pair of nodes $i$ and $j$ have the same cardinalities.  This fact has been exploited in Section~\ref{layermult}.
However, in general acyclic multidigraphs the paths from $i$ to $j$ can have different cardinalities, so cheaper recoverable action can require choosing more arcs. To overcome this problem, we will extend the construction of the graph $G'$
shown in the proof of Lemma~\ref{lem1}.  However, the neighborhoods are not equivalent now, and we must consider them separately.

\begin{lem}
\label{lem2}
There is a polynomial  time reduction from the \textsc{Rec SP} problem with the neighborhood  $\Phi^{\mathrm{incl}}(X,k)$
 in an acyclic multidigraph $G=(V,A)$ to the  \textsc{CSP}  problem in an acyclic multidigraph with $T=k$.
\end{lem}
\begin{proof}
Given an  acyclic multidigraph $G=(V,A)$ and a recovery parameter $k$ in an instance of~\textsc{Rec SP},
we construct an acyclic multidigraph~$G'=(V',A')$ in the corresponding instance of~\textsc{CSP}.
We set $V'=V$, $s'=s$, $t'=t$, and the set of arcs~$A'$ can  contain~$k+1$ types of arcs labeled as  $e^{(0)},\dots,e^{(k)}$, respectively.
 For each pair of nodes $i,j\in V$, $i\not= j$, such that there exists at least one arc 
 from~$i$ to~$j$, 
 we add to $A'$ the arc $e^{(0)}=(i,j)$ with the cost $c_{e^{(0)}}=\min_{e=(k,l)\in A\,:\,k=i, l=j} \{C_e+\overline{c}_e\}$
 and the transition time~$t_{e^{(0)}}=0$. 
 Consider a pair of nodes $i,j\in V$, $i\not=j$, such that $j$ is reachable from $i$. We first compute a path from $i$ to $j$ having the minimum number of arcs $L^*_{ij}$. If $L_{ij}^*>k$, then we do nothing. 
 Otherwise, we find a shortest path $X^*_{ij}$
 from $i$ to $j$ with the costs $C_e$.
  Then, for each $l=L_{ij}^*,\dots,k$, 
  we find a shortest path $Y^{*(l)}_{ij}$
   with the costs $\overline{c}_e$, $e\in A$, subject to the condition that $Y^{*(l)}_{ij}$ 
   has at most $l$ arcs. Observe that $Y^{*(l)}_{ij}$ 
    can be found in $O(|A|k)$ time by solving a CSP problem with arc transition times equal to~1 and 
 the transition time limit equal to~$l$.
 We add to $A'$ the arc $e^{(l)}=(i,j)$ with the cost 
 $c_{e^{(l)}}=\sum_{e\in X^*_{ij}} C_e+
 \sum_{e\in Y^{*(l)}_{ij}} \overline{c}_e$
 and with the transition time~$t_{e^{(l)}}=l$.
  Finally, we set $T=k$. Graph $G'$ can be constructed in $O(|V|^2 |A| k^2)$ time since we have to solve at most $k$ CSP problems for at most $|V|^2$ pairs of nodes.
As in Lemma~\ref{lem1}, we need to show that  there is a pair of  paths $X\in \Phi$ and 
$Y\in \Phi^{\mathrm{incl}}(X,k)$
 in~$G$,
 feasible to~\textsc{Rec SP}, with the cost  $\sum_{e\in X} C_e+\sum_{e\in Y}\overline{c}_e\leq UB$ 
 if and only if  there is a path~$\pi$ in~$G'$,  feasible to~\textsc{CSP},
  such that $\sum_{e\in \pi} c_e\leq UB$, where $UB\in \Rset$.

($\Rightarrow$) 
The proof goes the same way as that of Lemma~\ref{lem1}. Now, if $e=(i_{j-1},i_j)\notin X\cap Y$, then there are two disjoint subpaths $X'$ and $Y'$ from $i_{j-1}$ to $i_j$ in $G$.
The subpath $Y'$ has $l\in \{L^*_{i_{j-1}i_j},\dots,k\}$ arcs. We add to~$\pi$ the arc $e^{(l)}=(i_{j-1},i_j)\in A'$. 
The transition time of $e^{(l)}$ is $|Y'\setminus X'|=l$ 
and its cost is at most $\sum_{e\in X'} C_e+\sum_{e\in Y'} \overline{c}_e$. The rest of the proof is the same as in Lemma~\ref{lem1}  (an example is depicted in Figure~\ref{figproof2}).
 \begin{figure}[ht]
    \centering
    \includegraphics[height=5cm,width=14cm]{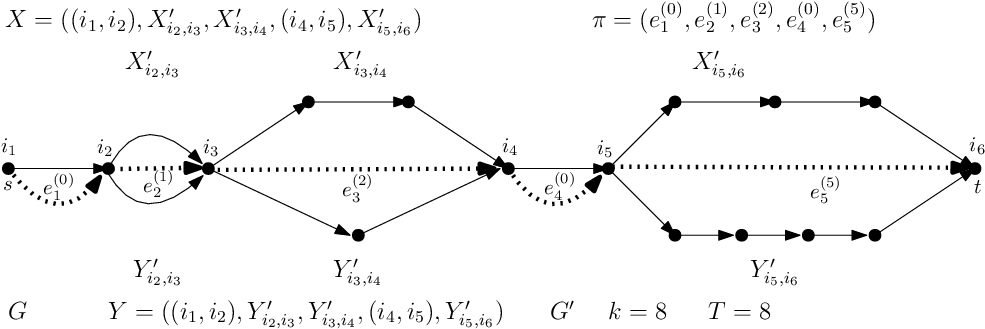}
  \caption{Illustration of the proof of Lemma~\ref{lem2}.  The pair of paths $X$ and $Y$ for $\Phi^{\mathrm{incl}}(X,k)$ 
  in $G$ corresponds to the path $\pi$ in~$G'$.} \label{figproof2}
  \end{figure}

 ($\Leftarrow$) Assume that $\pi$ is an $s$-$t$ path in~$G'$ feasible to~\textsc{CSP} with
  $\sum_{e\in \pi} c_e\leq UB$.
  For each arc $e^{(0)}=(i,j)\in \pi$, we add the arc $e=(i,j)\in A$ such that 
  $C_e+\overline{c}_e=c_{e^{(0)}}$ to~$X$ and to~$Y$.
  For each arc $e^{(l)}=(i,j)\in \pi$, we add the arcs corresponding to~$X^*_{ij}$
   to~$X$
  and the
  arcs corresponding to~$Y^{*(l)}_{ij}$
   to~$Y$.
   We recall that~$|Y^{*(l)}_{ij}|\leq l$.
  It follows easily  that
  $X\in \Phi$, $Y\in \Phi^{\mathrm{incl}}(X,k)$
  and 
   $UB\geq \sum_{e\in \pi} c_e= \sum_{e\in X} C_e+\sum_{e\in Y}\overline{c}_e$.
\end{proof}

\begin{lem}
\label{lem2a}
There is a polynomial  time reduction from the  \textsc{Rec SP} problem with the neighborhood $\Phi^{\mathrm{excl}}(X,k)$
 in an acyclic multidigraph $G=(V,A)$ to the  \textsc{CSP}  problem in an acyclic multidigraph with $T=k$.
\end{lem}
\begin{proof}
The reduction is similar to that in the proof of Lemma~\ref{lem2}. We only point out some differences. Having computed $L^*_{ij}\leq k$, we find a shortest path $Y^*_{ij}$ from $i$ to $j$ with the arc costs $\overline{c}_e$. Then, for each $l=L_{ij}^*,\dots,k$, we find a shortest path $X_{ij}^{*(l)}$ with the arc costs $C_e$, subject to the condition that $X_{ij}^{*(l)}$ has at most $l$ arcs. We add to $A'$ the arc $e^{(l)}=(i,j)$ with the cost 
 $c_{e^{(l)}}=\sum_{e\in Y^*_{ij}} \overline{c}_e+
 \sum_{e\in X^{*(l)}_{ij}} C_e$
 and with the transition time~$t_{e^{(l)}}=l$. The proof goes then similarly to the proof of Lemma~\ref{lem2} with the following modifications:

($\Rightarrow$) 
If $e=(i_{j-1},i_j)\notin X\cap Y$, then 
the subpath $X'$ from $i_{j-1}$ to $i_j$ in $G$ has $l\in \{L^*_{i_{j-1}i_j},\dots,k\}$ arcs. We add to~$\pi$ the arc $e^{(l)}=(i_{j-1},i_j)\in A'$.  The transition time of $e^{(l)}$ is $|X'\setminus Y'|=l$ 
and its cost is at most $\sum_{e\in X'} C_e+\sum_{e\in Y'} \overline{c}_e$.

 ($\Leftarrow$) 
  For each arc $e^{(l)}=(i,j)\in \pi$, we add the arcs corresponding to~$Y^*_{ij}$
   to~$Y$
  and the
  arcs corresponding to~$X^{*(l)}_{ij}$
   to~$X$ and
  we note that~$|X^{*(l)}_{ij}|\leq l$.
  It follows then easily  that
  $X\in \Phi$, $Y\in \Phi^{\mathrm{excl}}(X,k)$ and 
   $UB\geq \sum_{e\in \pi} c_e= \sum_{e\in X} C_e+\sum_{e\in Y}\overline{c}_e$.
\end{proof}

\begin{lem}
\label{lem2b}
There is a polynomial  time reduction from the  \textsc{Rec SP} problem with the neighborhood $\Phi^{\mathrm{sym}}(X,k)$
 in an acyclic multidigraph $G=(V,A)$ to the  \textsc{CSP}  problem in an acyclic multidigraph with $T=k$.
\end{lem}
\begin{proof}
 We again point out some differences in the reduction from Lemma~\ref{lem2}.  Having computed $L_{ij}^*$, if $2L_{ij}^*>k$, then we do nothing. 
  Otherwise, 
  for each $u=L_{ij}^*,\dots,k$ and $v=L_{ij}^*,\dots,k$,
  we find a shortest path $X^{*(u)}_{ij}$ 
   with the costs $C_e$, $e\in A$, subject to
    the constraint $|X^{*(u)}_{ij}|\leq u$ and
     a shortest path $Y^{*(v)}_{ij}$ 
   with the costs $\overline{c}_e$, $e\in A$, subject to
    the constraint $|Y^{*(v)}_{ij}|\leq v$. We
     add to $A'$ the arc $e^{(uv)}=(i,j)$ with the cost 
 $c_{e^{(uv)}}=\sum_{e\in  X^{*(u)}_{ij}} C_e+
 \sum_{e\in Y^{*(v)}_{ij}} \overline{c}_e$
 and the transition time~$t_{e^{(uv)}}=u+v$.  Graph $G'$ can be constructed in $O(|V|^2 |A| k^3)$ time because we have to solve at most $k^2$ \textsc{CSP} problems for at most $|V|^2$ pairs of nodes.
 The proof goes then similarly to the proof of Lemma~\ref{lem2} with the following modifications:
   
  ($\Rightarrow$) 
   If $e=(i_{j-1},i_j)\notin X\cap Y$, then 
the subpaths $X'$ and~$Y'$ from $i_{j-1}$ to $i_j$ in $G$ have $u,v\in \{L^*_{i_{j-1}i_j},\dots,k\}$ arcs, respectively.
We add to~$\pi$ the arc $e^{(uv)}=(i_{j-1},i_j)\in A'$. 
The transition time of $e^{(uv)}$ is $|(Y'\setminus X')\cup(X'\setminus Y') |=u+v$ 
and its cost is at most $\sum_{e\in X'} C_e+\sum_{e\in Y'} \overline{c}_e$. 
 
 ($\Leftarrow$) 
  For each arc $e^{(uv)}=(i,j)\in \pi$, we add the arcs corresponding to~$X^{*(u)}_{ij}$ to~$X$  and the
  arcs corresponding to~$Y^{*(v)}_{ij}$
    to~$Y$, where
  $|X^{*(u)}_{ij}|\leq u$ and $|Y^{*(v)}_{ij}|\leq v$. Thus,
  $X\in \Phi$, $Y\in \Phi^{\mathrm{sym}}(X,k)$   and 
   $UB\geq \sum_{e\in \pi} c_e= \sum_{e\in X} C_e+\sum_{e\in Y}\overline{c}_e$.    
\end{proof}

\begin{thm}
\textsc{Rec SP} in acyclic multidigraph $G=(V,A)$ can be solved in  $O(|V|^2 |A| k^2)$ time for the neighborhoods~$\Phi^{\mathrm{incl}}(X,k)$, $\Phi^{\mathrm{excl}}(X,k)$  and in  $O(|V|^2 |A| k^3)$ time for the neighborhood $\Phi^{\mathrm{sym}}(X,k)$.
\end{thm}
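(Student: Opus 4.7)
The plan is almost entirely a running-time accounting layered on top of Lemmas~\ref{lem2}, \ref{lem2a}, and~\ref{lem2b}, together with the $O(|A|T)$-time algorithm of Hassin~\cite{H92} for \textsc{CSP} in acyclic multidigraphs. Once the reductions are in hand, two quantities need to be bounded: the construction time of the auxiliary graph $G'$ in each reduction, and the running time of \textsc{CSP} on $G'$ with transition-time limit $T=k$.

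For the neighborhoods $\Phi^{\mathrm{incl}}(X,k)$ and $\Phi^{\mathrm{excl}}(X,k)$, I would invoke Lemmas~\ref{lem2} and~\ref{lem2a}, whose proofs already observe that building $G'$ amounts to solving $O(k)$ length-constrained shortest-path subproblems in $G$ for each of the $O(|V|^2)$ ordered pairs $(i,j)$, plus one unconstrained shortest-path computation per pair (in $O(|A|)$ time on acyclic~$G$). Each length-constrained subproblem is a \textsc{CSP} instance in $G$ with transition times~$1$ and bound at most~$k$, hence solvable in $O(|A|k)$ by Hassin, giving overall construction time $O(|V|^2|A|k^2)$. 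The produced graph $G'$ has $|V|$ nodes and, per pair $(i,j)$, at most one arc $e^{(0)}$ together with at most $k+1$ length-indexed arcs $e^{(l)}$, so $|A'|=O(|V|^2 k)$. Hassin's algorithm applied to $G'$ with $T=k$ then runs in $O(|A'|k)=O(|V|^2 k^2)$ time, which is absorbed by the construction cost (using $|A|\geq|V|-1$ on any instance admitting an $s$-$t$ path). The total is therefore $O(|V|^2|A|k^2)$.

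For $\Phi^{\mathrm{sym}}(X,k)$ the argument is analogous via Lemma~\ref{lem2b}: each pair $(i,j)$ now contributes $O(k^2)$ length-bounded shortest-path calls in $G$, one per admissible $(u,v)$, so $G'$ is built in $O(|V|^2|A|k^3)$ time and contains $O(|V|^2 k^2)$ arcs, on which Hassin's algorithm finishes in $O(|V|^2 k^3)$. Construction again dominates, yielding $O(|V|^2|A|k^3)$. The only substantive check — and the place I would be most careful — is confirming that every sub-routine used in building $G'$ is either an ordinary shortest path on acyclic~$G$ (cost $O(|A|)$) or a single length-bounded shortest path realizable as one \textsc{CSP} call of cost $O(|A|k)$, so that the per-pair budget is exactly what the lemmas claim; the rest is pure bookkeeping.
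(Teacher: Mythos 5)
Your proposal is correct and follows essentially the same route as the paper: it invokes Lemmas~\ref{lem2}, \ref{lem2a} and~\ref{lem2b}, bounds the construction time of $G'$ by the number of length-constrained shortest-path subproblems (each an $O(|A|k)$-time \textsc{CSP} call), bounds $|A'|$ by $O(|V|^2k)$ (resp. $O(|V|^2k^2)$), and observes that Hassin's $O(|A'|T)$ algorithm on $G'$ is dominated by the construction cost. The accounting matches the paper's proof in every particular.
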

\begin{proof}
  The multidigraphs $G'$ from Lemma~\ref{lem2} and Lemma~\ref{lem2a} can be constructed in $O(|V|^2|A|k^2)$
  time and have at most $|V|^2(k+1)$ arcs. The \textsc{CSP} problem in $G'$ can then be solved in  $O(|V|^2k^2)$ time. The overall running time for $\Phi^{\mathrm{incl}}(X,k)$ and $\Phi^{\mathrm{excl}}(X,k)$ is $O(|V|^2|A|k^2)$.
  Accordingly, the multidigraph $G'$ from Lemma~\ref{lem2b} can be constructed in  $O(|V|^2|A|k^3)$ time and has at most $|V|^2(k^2+1)$ arcs. The \textsc{CSP} problem in $G'$ can then be solved in  $O(|V|^2k^3)$ time. Hence, the overall running time for $\Phi^{\mathrm{sym}}(X,k)$ is $O(|V|^2|A|k^3)$.
\end{proof}

\subsubsection{Arc series-parallel multidigraph}

In this section, we will construct algorithms with better running time for \textsc{Rec SP} with all the considered neighborhoods, assuming that $G$ is an arc series-parallel multidigraph. Again, we have to treat all the considered neighborhoods separately, as the subpaths between two nodes of $G$ may have different cardinalities. 
Let us recall that an arc series-parallel multidigraph (ASP) is
recursively defined as follows (see, e.g.,~\cite{VTL82}). A graph
consisting of two nodes joined by a single arc is ASP. If $G_1$
and $G_2$ are ASP, so are the multidigraphs constructed by each of
the following two operations:
\begin{itemize}
\item \emph{parallel composition} $p(G_1,G_2)$: identify the
source of $G_1$ with the source of $G_2$ and the sink of $G_1$
with the sink of~$G_2$,
\item \emph{series composition}
$s(G_1,G_2)$: identify the sink of $G_1$ with the source of~$G_2$.
\end{itemize}
Each ASP multidigraph~$G$  is associated with a rooted binary
tree~$\mathcal{T}$, called \emph{the binary decomposition tree
of~$G$}, which can be constructed in $O(|A|)$ time~\cite{VTL82}.
 Each leaf of the tree represents an arc in~$G$. Each
internal node $\sigma$ of~$\mathcal{T}$ is labeled  by $\mathtt{S}$
or $\mathtt{P}$ and corresponds to the series or parallel composition
in $G$. Every node $\sigma$ of $\mathcal{T}$ corresponds to  an
ASP subgraph of~$G$, denoted by $G_{\sigma}$, defined by the subtree rooted at~$\sigma$.
The root of~$\mathcal{T}$ represents the input ASP multidigraph~$G$.

In~\cite{B12}, B{\"u}sing showed an idea of a polynomial-time algorithm for 
\textsc{Rec SP} with~$\Phi^{\mathrm{incl}}(X,k)$ in 
 ASP multidigraphs, together with a theorem resulting from it~\cite[Theorem~5]{B12}.
 In this section, we will describe a complete $O(|A|k^2)$-time algorithm for \textsc{Rec SP} in ASP multidigraphs with the  neighborhoods~$\Phi^{\mathrm{incl}}(X,k)$, $\Phi^{\mathrm{excl}}(X,k)$
and $\Phi^{\mathrm{sym}}(X,k)$.
Let $\Phi_{G_{\sigma}}$ denote the set of all paths from the source to the sink in~$G_{\sigma}$ and
\begin{align*}
 & \Phi^{l}_{G_{\sigma}}=\{X\in \Phi_ {G_{\sigma}}\,:\,|X|=l\},\\
 & \Psi^{\mathrm{incl}[l]}_{G_{\sigma}}=\{ (X,Y)  \in\Phi_{G_{\sigma}}\times \Phi_{G_{\sigma}} \,: |Y\setminus X|= l\},\\& \Psi^{\mathrm{excl}[l]}_{G_{\sigma}}=\{ (X,Y)  \in\Phi_{G_{\sigma}}\times \Phi_{G_{\sigma}} \,: |X\setminus Y|= l\}, \\& \Psi^{\mathrm{sym}[l]}_{G_{\sigma}}=\{ (X,Y)  \in\Phi_{G_{\sigma}}\times \Phi_{G_{\sigma}} \,: |Y\setminus X|+|X\setminus Y|= l\}.
 \end{align*}
  
Consider first the arc inclusion neighborhood $\Phi^{\mathrm{incl}}(X,k)$.
For each ASP subgraph~$G_{\sigma}$, we store the following three crucial pieces of information. 
The first one is the cost of a shortest path from  the source  to  the sink in~$G_{\sigma}$  under costs~$C_e$, $e\in A$:
\begin{equation}
C_{G_{\sigma}}=\min_{X\in \Phi_{G_{\sigma}}} \sum_{e\in X} C_e.
\label{locost}
\end{equation}
The second one is the $k$-element array~$\overline{\pmb{c}}_{G_{\sigma}}$, whose $l$-th element~$\overline{\pmb{c}}_{G_{\sigma}}[l]$
is the cost of a shortest path $Y^*\in \Phi^{l}_{G_{\sigma}}$  under the costs~$\overline{c}_e$, $e\in A$, if such a path exists:
\begin{equation}
\overline{\pmb{c}}_{G_{\sigma}}[l]=
\begin{cases}
\displaystyle \min_{Y\in \Phi^{l}_{G_{\sigma}}} \sum_{e\in Y}\overline{c}_e &\text{if 
$\Phi^{l}_{G_{\sigma}}\not=\emptyset$,}\\
+\infty&\text{otherwise},
\end{cases}
\quad l=1,\ldots,k.
\label{upcost}
\end{equation}
The third one is the $(k+1)$-element array~$\pmb{c}^*_{G_{\sigma}}$, whose $l$-th element~$\pmb{c}^*_{G_{\sigma}}[l]$,
stores the cost of an optimal pair $(X^*,Y^*) \in\Psi^{\mathrm{incl}[l]}_{G_{\sigma}}$,
if such  paths exist:
\begin{equation}
\label{defcs}
\pmb{c}^*_{G_{\sigma}}[l]=
\begin{cases}
\displaystyle\min_{(X,Y)\in  \Psi^{\mathrm{incl}[l]}_{G_{\sigma}}} \left(\sum_{e\in X} C_e +  \sum_{e\in Y} \overline{c}_e\right)&
\text{if $ \Psi^{\mathrm{incl}[l]}_{G_{\sigma}}\not=\emptyset $},\\
+\infty&\text{otherwise},
\end{cases}
\quad l=0,\ldots,k.
\end{equation}
Note that at least one set of paths $\Psi^{\mathrm{incl}[l]}_{G_{\sigma}}$ is always
nonempty in~$G_{\sigma}$, for instance, when $l=0$.
For each leaf node~$\sigma$ of the tree~$\mathcal{T}$ (in this case, the  subgraph~$G_{\sigma}$ consists of
a single arc~$e\in A$), the initial values of $C_{G_\sigma}$, $\overline{\pmb{c}}_{G_{\sigma}}$ and $\pmb{c}^*_{G_{\sigma}}$ are as follows:
\begin{align}
&C_{G_{\sigma}}= C_e,& \label{ini1}\\
&\overline{\pmb{c}}_{G_{\sigma}}[1]=\overline{c}_e,\;\overline{\pmb{c}}_{G_{\sigma}}[l]=+\infty, &l=2,\ldots,k,  \label{ini2}\\
&\pmb{c}^*_{G_{\sigma}}[0]=C_e+\overline{c}_e,\;\pmb{c}^*_{G_{\sigma}}[l]=+\infty, &l=1,\ldots,k.  \label{ini3}
\end{align}
By traversing the tree~$\mathcal{T}$ from the leaves to the root, we recursively 
construct each  ASP subgraph~$G_{\sigma}$ corresponding to the internal node~$\sigma$ and
compute 
the values of $C_{G_\sigma}$, $\overline{\pmb{c}}_{G_{\sigma}}$ and $\pmb{c}^*_{G_{\sigma}}$ associated with~$G_{\sigma}$,
 depending on the label of the node~$\sigma$.
If $\sigma$ is marked~$\mathtt{P}$, then 
we call Algorithm~\ref{parallel},
otherwise we  call Algorithm~\ref{series} on the subgraphs~$G_{\mathrm{left}(\sigma)}$ and $G_{\mathrm{right}(\sigma)}$,
where $\mathrm{left}(\sigma)$ and $\mathrm{right}(\sigma)$ are children of~$\sigma$ in~$\mathcal{T}$.
If $\sigma$ is the root of~$\mathcal{T}$, then  $G_{\sigma}=G$ and the array~$\pmb{c}^*_{G_{\sigma}}$
contains information about the cost 
of an optimal 
pair of paths~$X^*,Y^*\in \Phi$, 
being an optimal solution to \textsc{Rec SP} in~$G$, which is equal to 
\[
\min_{0\leq l\leq k} \pmb{c}^*_{G_{\sigma}}[l].
\]

The algorithm for \textsc{Rec SP} with~$\Phi^{\mathrm{incl}}(X,k)$
 is given in the form of Algorithm~\ref{aspalg}.
For simplicity of the presentation, we have only shown how to compute the costs stored in~$C_{G_{\sigma}}$, 
$\overline{\pmb{c}}_{G_{\sigma}}$ and~$\pmb{c}^*_{G_{\sigma}}$. The associated paths can be easily reconstructed by using pointers to the first and the last arcs of these paths, managed during the course of the algorithm.
\begin{algorithm}
\begin{small}
Perform parallel composition $G_{\sigma}\leftarrow p(G_{\mathrm{left}(\sigma)},G_{\mathrm{right}(\sigma)})$\;\label{parallel1}
$C_{G_{\sigma}}\leftarrow \min\{C_{G_{\mathrm{left}(\sigma)}},C_{G_{\mathrm{right}(\sigma)}}\}$\; \label{parallel6}
$\pmb{c}^*_{G_{\sigma}}[0]\leftarrow \min\{\pmb{c}^*_{G_{\mathrm{left}(\sigma)}}[0], \pmb{c}^*_{G_{\mathrm{right}(\sigma)}}[0]\}$\;
 \label{parallel2}
\For{$l=1$ \emph{\KwTo} $k$}{
$\pmb{c}^*_{G_{\sigma}}[l]\leftarrow \min\{\pmb{c}^*_{G_{\mathrm{left}(\sigma)}}[l], \pmb{c}^*_{G_{\mathrm{right}(\sigma)}}[l], 
C_{G_{\mathrm{left}(\sigma)}}+
\overline{\pmb{c}}_{G_{\mathrm{right}(\sigma)}}[l],  C_{G_{\mathrm{right}(\sigma)}}+\overline{\pmb{c}}_{G_{\mathrm{left}(\sigma)}}[l]\}$
\label{parallel4}\;
$\overline{\pmb{c}}_{G_{\sigma}}[l]\leftarrow \min\{\overline{\pmb{c}}_{G_{\mathrm{left}(\sigma)}}[l],\overline{\pmb{c}}_{G_{\mathrm{right}(\sigma)}}[l]\}$ \label{parallel5}
}
\Return{$(G_{\sigma}, \pmb{c}^*_{G_{\sigma}}, C_{G_{\sigma}}, \overline{\pmb{c}}_{G_{\sigma}})$}
  \caption{\small\texttt{P-incl}$(G_{\mathrm{left}(\sigma)},\pmb{c}^*_{G_{\mathrm{left}(\sigma)}},
  C_{G_{\mathrm{left}(\sigma)}},
  \overline{\pmb{c}}_{G_{\mathrm{left}(\sigma)}}, G_{\mathrm{right}(\sigma)}, \pmb{c}^*_{G_{\mathrm{right}(\sigma)}},C_{G_{\mathrm{right}(\sigma)}},\overline{\pmb{c}}_{G_{\mathrm{right}(\sigma)}})$}
 \label{parallel}
\end{small} 
\end{algorithm}

\begin{algorithm}
\begin{small}
Perform series composition $G_{\sigma}\leftarrow s(G_{\mathrm{left}(\sigma)},G_{\mathrm{right}(\sigma)})$\;
$C_{G_{\sigma}}\leftarrow C_{G_{\mathrm{left}(\sigma)}}+C_{G_{\mathrm{right}(\sigma)}}$\;\label{series4}
\For{$l=0$ \emph{\KwTo} $k$}{
$\pmb{c}^*_{G_{\sigma}}[l]\leftarrow \min_{0\leq j\leq l}\{\pmb{c}^*_{G_{\mathrm{left}(\sigma)}}[j]+\pmb{c}^*_{G_{\mathrm{right}(\sigma)}}[l-j]\}$\label{series3}
}
$\overline{\pmb{c}}_{G_{\sigma}}[1]\leftarrow +\infty$\;
\For{$l=2$ \emph{\KwTo} $k$}{
$\overline{\pmb{c}}_{G_{\sigma}}[l]\leftarrow \min_{1\leq j< l}\{\overline{\pmb{c}}_{G_{\mathrm{left}(\sigma)}}[j]+\overline{\pmb{c}}_{G_{\mathrm{right}(\sigma)}}[l-j]\}$\label{series7}
}   
\Return{$(G_{\sigma}, \pmb{c}^*_{G_{\sigma}}, C_{G_{\sigma}}, \overline{\pmb{c}}_{G_{\sigma}})$}
  \caption{\small \texttt{S-incl}$(G_{\mathrm{left}(\sigma)},\pmb{c}^*_{G_{\mathrm{left}(\sigma)}},
  C_{G_{\mathrm{left}(\sigma)}},
  \overline{\pmb{c}}_{G_{\mathrm{left}(\sigma)}}, G_{\mathrm{right}(\sigma)}, \pmb{c}^*_{G_{\mathrm{right}(\sigma)}},C_{G_{\mathrm{right}(\sigma)}},\overline{\pmb{c}}_{G_{\mathrm{right}(\sigma)}})$}
 \label{series}
\end{small} 
\end{algorithm}

\begin{algorithm}
\begin{small}
Find the binary decomposition tree~$\mathcal{T}$ of~$G$\;\label{aspalg0}
\ForEach(\tcp*[f]{Initially $\sigma$ corresponds to arc~$e\in A$ }){leaf~$\sigma$ of~$\mathcal{T}$ \label{aspalg01} }
{
$\pmb{c}^*_{G_{\sigma}}[0]\leftarrow C_e+\overline{c}_e$\;\label{aspalg1}
\lFor{$l=1$ \emph{\KwTo} $k$}{$\pmb{c}^*_{G_{\sigma}}[l]\leftarrow+\infty$}
$C_{G_{\sigma}}\leftarrow C_e$\;
$\overline{\pmb{c}}_{G_{\sigma}}[1]\leftarrow\overline{c}_e$\;
\lFor{$l=2$ \emph{\KwTo} $k$}{$\overline{\pmb{c}}_{G_{\sigma}}[l]\leftarrow+\infty$}\label{aspalg7}
}
\While{there is a node~$\sigma$ in $\mathcal{T}$ with two leaves:
 $\mathrm{left}(\sigma)$ and $\mathrm{right}(\sigma)$ \label{aspalg8}}{
 \eIf{node~$\sigma$ is  labeled~$\mathtt{P}$ in $\mathcal{T}$}{\label{aspalg9}
  $(G_{\sigma},\pmb{c}^*_{G_{\sigma}}, C_{G_{\sigma}}, \overline{\pmb{c}}_{G_{\sigma}})\leftarrow
   \texttt{P-incl}(G_{\mathrm{left}(\sigma)},\pmb{c}^*_{G_{\mathrm{left}(\sigma)}}C_{G_{\mathrm{left}(\sigma)}},
   \overline{\pmb{c}}_{G_{\mathrm{left}(\sigma)}}, G_{\mathrm{right}(\sigma)},
                                       \pmb{c}^*_{G_{\mathrm{right}(\sigma)}},C_{G_{\mathrm{right}(\sigma)}},
                                       \overline{\pmb{c}}_{G_{\mathrm{right}(\sigma)}} )$ \label{aspalgpar}
}(\tcp*[f]{node~$\sigma$ is labeled~$\mathtt{S}$ in $\mathcal{T}$})
{
    $(G_{\sigma},\pmb{c}^*_{G_{\sigma}}, C_{G_{\sigma}}, \overline{\pmb{c}}_{G_{\sigma}})\leftarrow
   \texttt{S-incl}(G_{\mathrm{left}(\sigma)},\pmb{c}^*_{G_{\mathrm{left}(\sigma)}}C_{G_{\mathrm{left}(\sigma)}},
   \overline{\pmb{c}}_{G_{\mathrm{left}(\sigma)}}, G_{\mathrm{right}(\sigma)},
                                       \pmb{c}^*_{G_{\mathrm{right}(\sigma)}},C_{G_{\mathrm{right}(\sigma)}},
                                       \overline{\pmb{c}}_{G_{\mathrm{right}(\sigma)}} )$     \label{aspalgser}
}
Delete   leaves $\mathrm{left}(\sigma)$ and $\mathrm{right}(\sigma)$  from~$\mathcal{T}$\;        \label{aspalg13}                          
 }
\Return{$\min_{0\leq l\leq k} \pmb{c}^*_{G_{\sigma}}[l]$}\tcp*[f]{node $\sigma$ is the root of~$\mathcal{T}$ corresponding to~$G$}
  \caption{Solve \textsc{Rec SP} in ASP$(G,\{C_e\}_{e\in A},\{\overline{c}_e\}_{e\in A},k,s,t)$ with~$\Phi^{\mathrm{incl}}(X,k)$}
   \label{aspalg}
\end{small} 
\end{algorithm}

An algorithm for \textsc{Rec SP} with~$\Phi^{\mathrm{excl}}(X,k)$ is  symmetric to the one with~$\Phi^{\mathrm{incl}}(X,k)$.
We also store three pieces of information for each ASP subgraph~$G_{\sigma}$. Namely,
the cost of a shortest path from  the source  to  the sink in~$G_{\sigma}$ under costs~$\overline{c}_e$, $e\in A$:
\begin{equation}
\overline{c}_{G_{\sigma}}=\min_{Y\in \Phi_{G_{\sigma}}} \sum_{e\in Y} \overline{c}_e.
\label{upcostexcl}
\end{equation}
The $k$-element array~$\pmb{C}_{G_{\sigma}}$, whose $l$-th element~$\pmb{C}_{G_{\sigma}}[l]$
is the cost of a shortest path from the source to the sink in~$G_{\sigma}$ under costs~$C_e$, $e\in A$,
that uses exactly~$l$ arcs if such a path exists:
\begin{equation}
\pmb{C}_{G_{\sigma}}[l]=
\begin{cases}
\displaystyle \min_{X\in \Phi^{l}_ {G_{\sigma}}} \sum_{e\in X}C_e &\text{if $ \Phi^{l}_ {G_{\sigma}}\not=\emptyset$,}\\
+\infty&\text{otherwise},
\end{cases}
\quad l=1,\ldots,k.
\label{locostexcl}
\end{equation}
The third element is the $(k+1)$-element array $\pmb{c}^*_{G_{\sigma}}$ whose definition is the same as~(\ref{defcs}) with the exception that $\Psi^{\mathrm{incl}[l]}_{G_{\sigma}}$ is replaced with $\Psi^{\mathrm{excl}[l]}_{G_{\sigma}}$.
The initial values  of $\overline{c}_{G_\sigma}$, $\pmb{C}_{G_{\sigma}}$ and $\pmb{c}^*_{G_{\sigma}}$
for each leaf node~$\sigma$ of the tree~$\mathcal{T}$
are as follows:
\begin{align}
&\overline{c}_{G_{\sigma}}= \overline{c}_e,&  \label{ini1excl}\\
&\pmb{C}_{G_{\sigma}}[1]=C_e,\; \pmb{C}_{G_{\sigma}}[l]=+\infty, &l=2,\ldots,k, \label{ini2excl}\\
&\pmb{c}^*_{G_{\sigma}}[0]=C_e+\overline{c}_e,\;\pmb{c}^*_{G_{\sigma}}[l]=+\infty, &l=1,\ldots,k. \label{ini3excl}
\end{align}
It is easily seen that adapting 
 Algorithms~\ref{parallel} and~\ref{series} and, in consequence, Algorithm~\ref{aspalg} for the 
 neighborhood~$\Phi^{\mathrm{excl}}(X,k)$
 requires small changes in them that take into account~$\overline{c}_{G_\sigma}$, $\pmb{C}_{G_{\sigma}}$, $\pmb{c}^*_{G_{\sigma}}$, and
(\ref{ini1excl})--(\ref{ini3excl}).

An algorithm for \textsc{Rec SP} with the symmetric difference neighborhood $\Phi^{\mathrm{sym}}(X,k)$ 
combines the ideas for the neighborhoods $\Phi^{\mathrm{incl}}(X,k)$ and $\Phi^{\mathrm{excl}}(X,k)$.
The three arrays are now associated with each ASP subgraph~$G_{\sigma}$. Namely, the $(k-1)$-element arrays~$\pmb{\overline{c}}_{G_{\sigma}}$, and $\pmb{C}_{G_{\sigma}}$  defined as~(\ref{upcost}) and~(\ref{locostexcl}), respectively, for $l\in [k-1]$.  The $(k+1)$-element array~$\pmb{c}^*_{G_{\sigma}}$, whose definition is the same as~(\ref{defcs}), with the exception that $\Psi^{\mathrm{incl}[l]}_{G_{\sigma}}$ is replaced with $\Psi^{\mathrm{sym}[l]}_{G_{\sigma}}$.
Of course, $\pmb{c}^*_{G_{\sigma}}[1]=+\infty$.
The initial values of $\pmb{\overline{c}}_{G_{\sigma}}$, $\pmb{C}_{G_{\sigma}}$, and $\pmb{c}^*_{G_{\sigma}}$,
for each leaf node~$\sigma$ of the tree~$\mathcal{T}$,
are as follows:
\begin{align}
&\overline{\pmb{c}}_{G_{\sigma}}[1]= \overline{c}_e,&\overline{\pmb{c}}_{G_{\sigma}}[l]=  +\infty, &&l=2,\ldots,k-1,
\label{ini1sym}\\
&\pmb{C}_{G_{\sigma}}[1]=C_e, &\pmb{C}_{G_{\sigma}}[l]=+\infty, &&l=2,\ldots,k-1, \label{ini2sym}\\
&\pmb{c}^*_{G_{\sigma}}[0]=C_e+\overline{c}_e,&\pmb{c}^*_{G_{\sigma}}[l]=+\infty, &&l=1,\ldots,k. \label{ini3sym}
\end{align}

\begin{algorithm}[h]
\begin{small}
Perform parallel composition $G_{\sigma}\leftarrow p(G_{\mathrm{left}(\sigma)},G_{\mathrm{right}(\sigma)})$\;
$\pmb{c}^*_{G_{\sigma}}[0]\leftarrow \min\{\pmb{c}^*_{G_{\mathrm{left}(\sigma)}}[0], \pmb{c}^*_{G_{\mathrm{right}(\sigma)}}[0]\}$\;
$\pmb{c}^*_{G_{\sigma}}[1]\leftarrow +\infty$\;
 \For{$l=2$ \emph{\KwTo} $k$}{
 $\pmb{c}^*_{G_{\sigma}}[l]\leftarrow \min_{1\leq j< l}
 \{\pmb{c}^*_{G_{\mathrm{left}(\sigma)}}[l], \pmb{c}^*_{G_{\mathrm{right}(\sigma)}}[l],   
 \pmb{C}_{G_{\mathrm{left}(\sigma)}}[j]+
\overline{\pmb{c}}_{G_{\mathrm{right}(\sigma)}}[l-j],  
\pmb{C}_{G_{\mathrm{right}(\sigma)}}[j]+\overline{\pmb{c}}_{G_{\mathrm{left}(\sigma)}}[l-j]\}$\;             
 }
\For{$l=1$ \emph{\KwTo} $k-1$}{
$\pmb{C}_{G_{\sigma}}[l]\leftarrow \min\{\pmb{C}_{G_{\mathrm{left}(\sigma)}}[l],\pmb{C}_{G_{\mathrm{right}(\sigma)}}[l]\}$\;
$\overline{\pmb{c}}_{G_{\sigma}}[l]\leftarrow \min\{\overline{\pmb{c}}_{G_{\mathrm{left}(\sigma)}}[l],\overline{\pmb{c}}_{G_{\mathrm{right}(\sigma)}}[l]\}$
}
\Return{$(G_{\sigma}, \pmb{c}^*_{G_{\sigma}}, \pmb{C}_{G_{\sigma}}, \overline{\pmb{c}}_{G_{\sigma}})$}
  \caption{\small \texttt{P-sym}$(G_{\mathrm{left}(\sigma)},\pmb{c}^*_{G_{\mathrm{left}(\sigma)}},
  \pmb{C}_{G_{\mathrm{left}(\sigma)}},
  \overline{\pmb{c}}_{G_{\mathrm{left}(\sigma)}}, G_{\mathrm{right}(\sigma)}, \pmb{c}^*_{G_{\mathrm{right}(\sigma)}},
  \pmb{C}_{G_{\mathrm{right}(\sigma)}},\overline{\pmb{c}}_{G_{\mathrm{right}(\sigma)}})$}
 \label{parallelsym}
\end{small} 
\end{algorithm}

\begin{algorithm}[h]
\begin{small}
Perform series composition $G_{\sigma}\leftarrow s(G_{\mathrm{left}(\sigma)},G_{\mathrm{right}(\sigma)})$\;
\For{$l=0$ \emph{\KwTo} $k$}{
$\pmb{c}^*_{G_{\sigma}}[l]\leftarrow \min_{0\leq j\leq l}\{\pmb{c}^*_{G_{\mathrm{left}(\sigma)}}[j]+\pmb{c}^*_{G_{\mathrm{right}(\sigma)}}[l-j]\}$
}
$\pmb{C}_{G_{\sigma}}[1]\leftarrow  +\infty$\;
$\overline{\pmb{c}}_{G_{\sigma}}[1]\leftarrow +\infty$\;
\For{$l=2$ \emph{\KwTo} $k-1$}{
$\pmb{C}_{G_{\sigma}}[l]\leftarrow \min_{1\leq j< l}\{\pmb{C}_{G_{\mathrm{left}(\sigma)}}[j]+\pmb{C}_{G_{\mathrm{right}(\sigma)}}[l-j]\}$\;
$\overline{\pmb{c}}_{G_{\sigma}}[l]\leftarrow \min_{1\leq j< l}\{\overline{\pmb{c}}_{G_{\mathrm{left}(\sigma)}}[j]+\overline{\pmb{c}}_{G_{\mathrm{right}(\sigma)}}[l-j]\}$
}   
\Return{$(G_{\sigma}, \pmb{c}^*_{G_{\sigma}}, \pmb{C}_{G_{\sigma}}, \overline{\pmb{c}}_{G_{\sigma}})$}
  \caption{\small \texttt{S-sym}$(G_{\mathrm{left}(\sigma)},\pmb{c}^*_{G_{\mathrm{left}(\sigma)}},
  \pmb{C}_{G_{\mathrm{left}(\sigma)}},
  \overline{\pmb{c}}_{G_{\mathrm{left}(\sigma)}}, G_{\mathrm{right}(\sigma)}, \pmb{c}^*_{G_{\mathrm{right}(\sigma)}},\pmb{C}_{G_{\mathrm{right}(\sigma)}},\overline{\pmb{c}}_{G_{\mathrm{right}(\sigma)}})$}
 \label{seriessym}
\end{small} 
\end{algorithm}
In order to give a version of  Algorithm~\ref{aspalg} for 
 the 
 neighborhood~$\Phi^{\mathrm{sym}}(X,k)$, we need to modify the initialization according to
 (\ref{ini1sym})--(\ref{ini3sym} (lines~\ref{aspalg01}--\ref{aspalg7}) and
 call Algorithms~\ref{parallelsym} and~\ref{seriessym} in the lines~\ref{aspalgpar} and~\ref{aspalgser}, respectively.
 
 We now prove the correctness of the algorithms.
\begin{thm}
\textsc{Rec SP} with~$\Phi^{\mathrm{incl}}(X,k)$, $\Phi^{\mathrm{excl}}(X,k)$ and $\Phi^{\mathrm{sym}}(X,k)$  in  
 arc series-parallel multidigraphs~$G=(V,A)$ can be solved in
 $O(|A|k^2)$ time.
\end{thm}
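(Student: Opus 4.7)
The plan is to prove correctness by structural induction on the binary decomposition tree $\mathcal{T}$, and then argue the running time by charging $O(k^2)$ work to each internal node of $\mathcal{T}$.

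For the induction, I would fix a neighborhood (say $\Phi^{\mathrm{incl}}(X,k)$) and prove that at every node $\sigma$ of $\mathcal{T}$, the quantities $C_{G_\sigma}$, $\overline{\pmb{c}}_{G_\sigma}[l]$, and $\pmb{c}^*_{G_\sigma}[l]$ stored by the algorithm agree with their definitions in~(\ref{locost}), (\ref{upcost}), and~(\ref{defcs}). The base case is immediate from~(\ref{ini1})--(\ref{ini3}) since a leaf's subgraph is a single arc. For the inductive step at an internal node $\sigma$, I would use that every $s$-$t$ path of $G_\sigma$ either lies entirely in $G_{\mathrm{left}(\sigma)}$ or entirely in $G_{\mathrm{right}(\sigma)}$ (parallel case) or decomposes uniquely as a concatenation of an $s$-$t$ path in $G_{\mathrm{left}(\sigma)}$ with one in $G_{\mathrm{right}(\sigma)}$ (series case). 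In the parallel case, a candidate pair $(X,Y)$ may have $X$ and $Y$ in different branches, and exactly $|Y \setminus X|=|Y|$ since the branches are arc-disjoint: this justifies the $C_{G_{\mathrm{left}}} + \overline{\pmb{c}}_{G_{\mathrm{right}}}[l]$ and symmetric terms in Algorithm~\ref{parallel}. In the series case, $|Y \setminus X|$ splits additively over the two halves, which is precisely the min-plus convolution on line~\ref{series3} of Algorithm~\ref{series}.

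For $\Phi^{\mathrm{excl}}(X,k)$, the argument is entirely symmetric after swapping the roles of $C_e/\overline{c}_e$ and $X/Y$. For $\Phi^{\mathrm{sym}}(X,k)$, the extra difficulty is that the parallel case must account for both directions of divergence simultaneously: when $X$ and $Y$ lie in different branches, the symmetric difference cardinality equals $|X|+|Y|$, so combining $\pmb{C}_{G_{\mathrm{left}}}[j]$ with $\overline{\pmb{c}}_{G_{\mathrm{right}}}[l-j]$ (and symmetrically) and summing over $1\leq j<l$ gives all ways to split the budget $l$ between the two halves. Algorithms~\ref{parallelsym} and~\ref{seriessym} implement this, and the induction step follows by the same decomposition argument.

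For the complexity, Valdes--Tarjan--Lawler~\cite{VTL82} gives the decomposition tree in $O(|A|)$ time; the tree has $|A|$ leaves and hence $|A|-1$ internal nodes. Initialization at the leaves costs $O(k)$ each, totaling $O(|A|k)$. At each internal node, the parallel composition performs $O(k)$ componentwise minimums, while the series composition performs $O(k)$ min-plus convolutions of arrays of length $O(k)$, costing $O(k^2)$ per node. Summing over all $O(|A|)$ internal nodes yields the claimed $O(|A|k^2)$ bound, which dominates the $O(|A|k)$ initialization cost and the final $O(k)$ scan $\min_{0\leq l\leq k}\pmb{c}^*_{G}[l]$.

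The main obstacle I anticipate is verifying the symmetric-difference recurrence carefully: one must check that no pair $(X,Y)$ is missed when $X$ and $Y$ share the root node of a parallel composition but diverge inside one of the branches (handled by the $\pmb{c}^*_{G_{\mathrm{left}(\sigma)}}[l]$ and $\pmb{c}^*_{G_{\mathrm{right}(\sigma)}}[l]$ terms), versus diverging at the top-level branch choice (handled by the cross terms involving $\pmb{C}[j]$ and $\overline{\pmb{c}}[l-j]$). Once this case analysis is laid out cleanly, the rest of the proof is routine.
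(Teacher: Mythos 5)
Your proposal is correct and follows essentially the same route as the paper: induction over the binary decomposition tree, using that a parallel composition makes the path set a disjoint union (so cross terms with exact-cardinality shortest-path arrays handle the case of $X$ and $Y$ in different branches) and a series composition forces concatenation (so the min-plus convolution over the budget split is exact), with the $O(|A|)$ tree nodes each charged $O(k^2)$ work. The paper likewise treats $\Phi^{\mathrm{excl}}$ as symmetric and $\Phi^{\mathrm{sym}}$ via the same case analysis you flag as the main obstacle, so nothing essential is missing.
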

\begin{proof}
We will give a proof only for the arc inclusion neighborhood~$\Phi^{\mathrm{incl}}(X,k)$.
The proof for the neighborhood~$\Phi^{\mathrm{excl}}(X,k)$ is almost the same, while
a proof for~$\Phi^{\mathrm{sym}}(X,k)$ is similar in spirit to the proof given below. 
It is enough to make some technical changes to
take into account the differences among the neighborhoods described previously in this section.

We first prove that for each node~$\sigma$  of the decomposition tree~$\mathcal{T}$,
 graph~$G_{\sigma}$ is an ASP subgraph of~$G$ and the 
corresponding  costs~$C_{G_\sigma}$, $\overline{\pmb{c}}_{G_{\sigma}}$ and $\pmb{c}^*_{G_{\sigma}}$ 
are correctly computed by Algorithm~\ref{aspalg}. The first part of the claim is due to~\cite{VTL82}.
It is easily seen that after running  lines~\ref{aspalg1}--\ref{aspalg7}, for each leaf~$\sigma$ of~$\mathcal{T}$,
the costs $C_{G_\sigma}$, $\overline{\pmb{c}}_{G_{\sigma}}$ and $\pmb{c}^*_{G_{\sigma}}$, 
corresponding to the single-arc subgraphs, are correctly initialized. Hence, if $|A|=1$, then the claim trivially holds. Assume that $|A|\geq 2$. The
 proof is by induction on the number of iterations of  Algorithm~\ref{aspalg} (the lines~\ref{aspalg9}--\ref{aspalg13}).

The base case is when the algorithm runs for only one iteration ($|A|=2$).
The node~$\sigma$ is then the root of~$\mathcal{T}$ and has two  leaves
corresponding to 
two single-arc subgraphs~$G_{\mathrm{left}(\sigma)}$ and~$G_{\mathrm{right}(\sigma)}$.
If $\sigma$ is  labeled~$\mathtt{P}$, then the algorithm~ \texttt{P-incl} is called for 
$G_{\mathrm{left}(\sigma)}$ and~$G_{\mathrm{right}(\sigma)}$, otherwise the  \texttt{S-incl} is called.
Consider the  \texttt{P-incl} case. First the parallel composition of~$G_{\mathrm{left}(\sigma)}$ and~$G_{\mathrm{right}(\sigma)}$
   is performed. The resulting graph~$G_{\sigma}$
 has two parallel arcs, $e_1,e_2$, the source and the sink. We get
 $\pmb{c}^*_{G_{\sigma}}[0]=\min\{C_{e_1}+\overline{c}_{e_1},C_{e_2}+\overline{c}_{e_2}\}$,
 $\pmb{c}^*_{G_{\sigma}}[1]=\min\{+\infty,+\infty,C_{e_1}+\overline{c}_{e_2},C_{e_2}+\overline{c}_{e_1}\}$,
 $\overline{\pmb{c}}_{G_{\sigma}}[1]=\min\{\overline{c}_{e_1},\overline{c}_{e_2}\}$ and
 $C_{G_{\sigma}}=\min\{C_{e_1},C_{e_2}\}$.
 The rest of elements of $\pmb{c}^*_{G_{\sigma}}$ and $\overline{\pmb{c}}_{G_{\sigma}}$ are equal to $+\infty$.
 Let us turn to the case  \texttt{S-incl}. Now, after 
 the series composition of~$G_{\mathrm{left}(\sigma)}$ and~$G_{\mathrm{right}(\sigma)}$,
 the resulting  graph~$G_{\sigma}$ has three nodes $\{s,v_1,t\}$ and two series arcs $e_1=(s,v_1)$, $e_2=(v_1,t)$.
 Therefore, only one $s$-$t$ path exists in~$G_{\sigma}$. We get $\pmb{c}^*_{G_{\sigma}}[0]=C_{e_1}+\overline{c}_{e_1}+C_{e_2}+\overline{c}_{e_2}$,
 $C_{G_{\sigma}}=C_{e_1}+C_{e_2}$ and 
 $\overline{\pmb{c}}_{G_{\sigma}}[2]=\overline{c}_{e_1}+\overline{c}_{e_2}$.
 The rest of the elements of $\pmb{c}^*_{G_{\sigma}}$ and $\overline{\pmb{c}}_{G_{\sigma}}$ are equal to $+\infty$. Hence all the costs $C_{G_\sigma}$, $\overline{\pmb{c}}_{G_{\sigma}}$ and $\pmb{c}^*_{G_{\sigma}}$  are correctly computed for the base case. 
 
 For the induction step, let~$\sigma$ be an internal  node of~$\mathcal{T}$ with two leaves  $\mathrm{left}(\sigma)$ and $\mathrm{right}(\sigma)$.
 By the induction hypothesis, $G_{\mathrm{left}(\sigma)}$ and~$G_{\mathrm{right}(\sigma)}$ are ASP subgraphs and
the costs $C_{G_\sigma}$, $\overline{\pmb{c}}_{G_{\sigma}}$ and $\pmb{c}^*_{G_{\sigma}}$,  associated  with them,
 are correctly computed. We need to consider two cases that depend on the label of~$\sigma$.
 The first case (the label~$\mathtt{P}$) is when the algorithm  \texttt{P-incl} is called for 
$G_{\mathrm{left}(\sigma)}$ and~$G_{\mathrm{right}(\sigma)}$.
After the parallel composition of $G_{\mathrm{left}(\sigma)}$ and~$G_{\mathrm{right}(\sigma)}$,
the resulting subgraph~$G_{\sigma}$ is an ASP subgraph. 
Note that, $\Phi_{G_{\sigma}}= \Phi_{\mathrm{left}(\sigma)} \cup \Phi_{G_{\mathrm{left}(\sigma)}}$,
where $\Phi_{G_{\mathrm{left}(\sigma)}}\cap \Phi_{G_{\mathrm{left}(\sigma)}}=\emptyset$.
Hence, we immediately get that the costs computed in lines~\ref{parallel6},  \ref{parallel2}, and~\ref{parallel5} of Algorithm~\ref{parallel} are correctly computed.
The costs $\pmb{c}^*_{G_{\sigma}}[l]$, for $l=1,\ldots,k$, (see line~\ref{parallel5} of Algorithm~\ref{parallel}),
are correctly computed as well.
Indeed, the first two terms in the minimum are obvious. The last two terms take into account the cases when the
optimal paths $X^*\in \Phi_{G_{\sigma}}$ and $Y^*\in \Phi_{G_{\sigma}}$ in $\Psi^{\mathrm{incl}[l]}_{G_{\sigma}}$ are such that
$X^*\in \Phi_{G_{\mathrm{left}(\sigma)}}$ and $Y^*\in \Phi_{G_{\mathrm{right}(\sigma)}}$ or
$X^*\in \Phi_{G_{\mathrm{right}(\sigma)}}$ and $Y^*\in \Phi_{G_{\mathrm{left}(\sigma)}}$.
Consider the second case (the label~$\mathtt{S}$) when the algorithm~ \texttt{S-incl} is called.
The subgraph~$G_{\sigma}$, after  the series composition of $G_{\mathrm{left}(\sigma)}$ and~$G_{\mathrm{right}(\sigma)}$,
is an ASP subgraph and there is a node~$v$ in $G_{\sigma}$ being  the sink of~$G_{\mathrm{left}(\sigma)}$
and the source of~$G_{\mathrm{right}(\sigma)}$.
Thus, every path $X\in \Phi_{G_{\sigma}}$ must traverse the node~$v$ and $X$ is 
the concatenation of $X_1\in  \Phi_{G_{\mathrm{left}(\sigma)}}$ and 
$X_2\in  \Phi_{G_{\mathrm{right}(\sigma)}}$.
In what follows,
for each $l=0,\ldots,k$, the cost
$\pmb{c}^*_{G_{\sigma}}[l]$ in $G_{\sigma}$
is the sum of its optimal counterparts, respectively, in  $G_{\mathrm{left}(\sigma)}$ and~$G_{\mathrm{right}(\sigma)}$,
for some $0\leq j^*\leq l$.
It is enough to find such~$j^*$ for each~$l$.
A similar argument holds for the costs
$\overline{\pmb{c}}_{G_{\sigma}}[l]$ for $l=2,\ldots,k$.
Therefore, Algorithm~\ref{series} correctly computes the costs $C_{G_\sigma}$, $\overline{\pmb{c}}_{G_{\sigma}}$ and $\pmb{c}^*_{G_{\sigma}}$  in lines~\ref{series4}, \ref{series3} and~\ref{series7}. This proves the claim.

If $\sigma$  is  the root of~$\mathcal{T}$, then  $G_{\sigma}=G$. It immediately  follows  from the claim,
that  the array~$\pmb{c}^*_{G_{\sigma}}$ contains the optimal costs of the solutions in $\Psi^{\mathrm{incl}[l]}_{G_{\sigma}}$ for $l=0,\ldots,k$.
Thus, 
the cost 
of an optimal solution,~$X^*,Y^*\in \Phi$ to \textsc{Rec SP} in~$G$ is equal to $\min_{0\leq l\leq k} \pmb{c}^*_{G_{\sigma}}[l]$.

Let us analyze the running time of Algorithm~\ref{aspalg}. The binary
decomposition tree $\mathcal{T}$ of~$G$, line~\ref{aspalg0}, can be constructed in $O(|A|)$ time~\cite{VTL82}.
The initialization, lines~\ref{aspalg01}--\ref{aspalg7}, can be done in $O(|A|)$ time.
The root of $\mathcal{T}$ can be
reached in $O(|A|)$ time, lines~\ref{aspalg8}--\ref{aspalg13}. Algorithms~\ref{parallel} and~\ref{series}
require~$O(k^2)$ time.
Hence, the running time of Algorithm~\ref{aspalg} is $O(|A|k^2)$. Since 
the algorithm for $\Phi^{\mathrm{excl}}(X,k)$ is symmetric, its
running time is the same.
Observe that  Algorithms~\ref{parallelsym} and~\ref{seriessym}  require~$O(k^2)$ time.
Thus, the running time of algorithm for $\Phi^{\mathrm{sym}}(X,k)$  is $O(|A|k^2)$.
\end{proof}

\section{The \textsc{Rec Rob SP}  problem under the budgeted uncertainty}
\label{secrobust}

In this section, we consider the \textsc{Rec Rob SP}  problem under the budgeted uncertainty. Adding budgets to the interval uncertainty representation makes the problem more challenging. Of course, all the hardness results for 
\textsc{Rec SP} (see Section~\ref{srsppm}) remain valid for \textsc{Rec Rob SP}  with the budgeted uncertainty. In particular, the problem is strongly NP-hard and not approximable in general digraphs for all the neighborhoods under consideration.
Table~\ref{tab2} summarizes the main results obtained later in this section.
\begin{table}[h]
 \begin{small}
 \caption{Summary of the main results for  \textsc{Rec Rob SP}
 under the budgeted uncertainty
 for various neighborhoods.}
 \label{tab2}
 \begin{center}
  \begin{tabular}{l|l|l|l}
                      \hline
   Budgeted            &\multicolumn{3}{c}{Neighborhoods}\\
                     \cline{2-4}
uncertainty & $\Phi^{\mathrm{incl}}(X,k)$ & $\Phi^{\mathrm{excl}}(X,k)$ & $\Phi^{\mathrm{sym}}(X,k)$\\ 
\hline \hline
    $\mathcal{U}(\Gamma^c)$    & MIP&  MIP &  MIP \\
                                                   \cline{2-4}
                                                  & para-NP-complete&&\\ 
    \hline
     $\mathcal{U}(\Gamma^d)$    & MIP(?) - open problem& $\Sigma_3^p$-hard~\cite{JKZ24} & $\Sigma_3^p$-hard~\cite{JKZ24} \\
                                      \cline{2-4}
                                     & strongly NP-hard       &    $\Sigma^p_3$-hard& $\Sigma^p_3$-hard\\
                                      &not approximable&  to approximate& to approximate~\cite{JKZ24}\\
                                      & for $k=\Gamma^d=1$~\cite{NO13}&for $k=2$~\cite{JKZ24}&\\
  \hline
 \end{tabular}
 \end{center}
  \end{small}
  \end{table}

\subsection{The \textsc{Rec Rob SP}  problem under~$\mathcal{U}(\Gamma^c)$}

We now show a compact MIP
formulation for \textsc{Rec Rob SP}  under the scenario set $\mathcal{U}(\Gamma^c)$. We will use the fact that $\mathcal{U}(\Gamma^c)$ is a special case of polyhedral uncertainty.

\begin{thm}
\label{tmcfuc}
The \textsc{Rec Rob SP}  under~$\mathcal{U}(\Gamma^c)$ in  general digraph $G=(V,A)$, $|A|=m$,
 with neighborhoods $\Phi^{\mathrm{incl}}(X,k)$, $\Phi^{\mathrm{excl}}(X,k)$ and $\Phi^{\mathrm{sym}}(X,k)$,
admits a compact MIP formulation.
\end{thm}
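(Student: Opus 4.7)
The plan is to derive the compact MIP by LP-dualizing the adversary's problem, exploiting that $\mathcal{U}(\Gamma^c)$ is a polytope. Writing each scenario as $c_e^S=\hat c_e+\delta_e$ with $\delta\in U:=\{\delta\in\mathbb{R}^{|A|}_+:\delta_e\leq\Delta_e,\ \sum_e\delta_e\leq\Gamma^c\}$, for any fixed recovery path $\pmb y$ the adversary solves the linear program
\[
\max_{\delta\in U}\ \sum_{e\in A}\delta_e y_e.
\]
Introducing a nonnegative multiplier $\alpha$ for the budget constraint $\sum_e\delta_e\leq\Gamma^c$ and nonnegative multipliers $\beta_e$ for the upper bounds $\delta_e\leq\Delta_e$, its LP dual is
\[
\min\Bigl\{\Gamma^c\alpha+\sum_{e\in A}\Delta_e\beta_e\ :\ \alpha+\beta_e\geq y_e\ \forall e\in A,\ \alpha\geq 0,\ \beta_e\geq 0\Bigr\},
\]
and strong LP duality ensures that both optimum values coincide.

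Next, I would plug this dual into the deterministic compact model (\ref{cstx0})--(\ref{cstx}) for \textsc{Rec SP} from Section~\ref{srsppm}, replacing the second-stage objective term $\sum_e\overline c_e y_e$ by $\sum_e\hat c_e y_e+\Gamma^c\alpha+\sum_e\Delta_e\beta_e$ together with the linking inequalities $\alpha+\beta_e\geq y_e$, $\alpha,\beta_e\geq 0$. Keeping the flow-balance description of $\pmb x,\pmb y\in\chi(\Phi)$ and the neighborhood linearizations (\ref{cneighincl})/(\ref{cneighexcl})/(\ref{cneighsym}) together with (\ref{cinter1})--(\ref{cinter3}) yields the single-level formulation
\begin{align*}
\min\quad & \sum_{e\in A}C_e x_e+\sum_{e\in A}\hat c_e y_e+\Gamma^c\alpha+\sum_{e\in A}\Delta_e\beta_e,\\
\text{s.t.}\quad & \pmb x\in\chi(\Phi),\ \pmb y\in\chi(\Phi(X,k))\ \text{(linearized as in Sec.~\ref{srsppm})},\\
& \alpha+\beta_e\geq y_e\quad\forall e\in A,\qquad \alpha\geq 0,\ \beta_e\geq 0,
\end{align*}
with binary variables $\pmb x,\pmb y$ (and the auxiliary $z_e$) and continuous $\alpha,\beta_e$.

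Finally, I would check sizes: on top of the $O(|A|)$ variables and constraints encoding the two paths and the neighborhood, the reformulation adds only $|A|+1$ continuous variables ($\alpha$ and the $\beta_e$) and $|A|$ linear inequalities $\alpha+\beta_e\geq y_e$, so the total size is $O(|A|)=O(m)$, polynomial in the input. The construction is uniform across the three neighborhoods: only the block (\ref{cneighincl})/(\ref{cneighexcl})/(\ref{cneighsym}) changes, and each is already compact by Section~\ref{srsppm}.

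The main point to verify carefully is the correctness of substituting the dual inside the outer minimization: one needs the joint optimum over $(\pmb x,\pmb y,\alpha,\beta)$ to reproduce the nested value $\min_{\pmb x}\bigl[\sum_e C_e x_e+\max_{\delta\in U}\min_{\pmb y\in\chi(\Phi(X,k))}\sum_e(\hat c_e+\delta_e)y_e\bigr]$. This relies on strong LP duality for the adversary's polytope $U$ together with the fact that, once the worst-case second-stage cost is represented through the dual variables $(\alpha,\beta)$, its dependence on $\pmb y$ is captured linearly by the inequalities $\alpha+\beta_e\geq y_e$, so joint minimization over $\pmb y$ and $(\alpha,\beta)$ does not introduce a gap.
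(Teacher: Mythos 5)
There is a genuine gap, and it is exactly at the point you flag at the end but then wave away. Your reformulation dualizes $\max_{\delta\in U}\sum_e\delta_e y_e$ \emph{for a fixed recovery path} and then minimizes jointly over $(\pmb x,\pmb y,\alpha,\beta)$. By LP duality this joint minimization computes
\[
\min_{X\in\Phi}\Bigl(\sum_{e\in X}C_e+\min_{Y\in\Phi(X,k)}\max_{\delta\in U}\sum_{e\in Y}(\hat c_e+\delta_e)\Bigr),
\]
i.e.\ the order of the adversary's maximization and the recovery minimization has been silently swapped relative to~(\ref{rrsp}), where the recovery path is chosen \emph{after} the scenario is revealed. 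Since the inner minimization ranges over the discrete set $\Phi(X,k)$ rather than a convex set, no minimax theorem applies and the gap can be strict: take two parallel $s$-$t$ arcs with $C_e=\hat c_e=0$, $\Delta_e=1$, $\Gamma^c=1$, and $k$ large. The true value of $\max_{\delta}\min_Y$ is $1/2$ (the adversary must split the budget), whereas your model returns $\min_Y\max_\delta=1$ (once $Y$ is fixed, the whole budget goes onto it). So your MIP is a valid formulation of a more conservative min--max problem, not of \textsc{Rec Rob SP}.

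The paper's proof inserts precisely the missing ingredient. Using the minimax theorem on the convexified second stage together with Carath\'eodory's theorem, the recovery decision can be restricted to a first-stage choice of at most $m+1$ candidate paths $Y^{(1)},\dots,Y^{(m+1)}\in\Phi(X,k)$, giving the equivalent form~(\ref{robrecequiv}). For \emph{fixed} $\pmb y^{(1)},\dots,\pmb y^{(m+1)}$ the adversarial problem $\max_{S\in\mathcal U(\Gamma^c)}\min_{i\in[m+1]}\sum_e c_e^S y_e^{(i)}$ is a genuine LP (via an epigraph variable $t$), and only \emph{then} is strong duality invoked; the resulting dual introduces multipliers $\lambda_i$ and products $y_e^{(i)}\lambda_i$ that must additionally be linearized. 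Your single-copy construction with variables $(\alpha,\beta_e)$ would be correct if the second-stage path had to be committed before the scenario, but for the recoverable robust problem you need the $m+1$ copies of the path and neighborhood constraints; without them the formulation is not equivalent to the problem in the theorem.
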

\begin{proof}
The idea of the construction is similar to that in~\cite[Theorem~1]{BG22a} (see also~\cite{GLW24}),
where a compact MIP formulation for a recoverable version of a single-machine scheduling problem under uncertainty was proposed. It uses  Carath{\'e}odory's theorem,  which states that any point in a convex hull ${\rm Conv}(P)$ of a set $P\subset \mathbb{R}^n$ can be expressed as a convex combination of at most $n+1$ points in $P$. Then, using the minimax theorem (see, e.g.,~\cite{BG22a} for details) allows us to express \textsc{Rec Rob SP}  as follows:
\begin{equation}
\label{robrecequiv}
\min_{\underset{Y^{(1)},\ldots, Y^{(m+1)} \in \Phi(X,k)}{X\in \Phi}} \left(\sum_{e\in X} C_e + \max_{S\in \mathcal{U}(\Gamma^c)}
\min_{i\in [m+1]}
\sum_{e\in Y^{(i)}} c_e^S\right).
\end{equation}
Fix $\pmb{x}\in \chi(\Phi)$ and  $\pmb{y}^{(i)}\in \chi(\Phi(\pmb{x},k))$, $i\in [m+1]$ (see Section~\ref{secmiprec} for the description of the sets $\chi(\Phi)$ and $\chi(\Phi(\pmb{x},k))$. The inner problem
 \begin{equation}
 \label{advincl}
 \max_{S\in \mathcal{U}(\Gamma^c)}
\min_{i\in [m+1]}
\sum_{e\in Y^{(i)}} c_e^S
\end{equation}
 can be modeled as follows:
 \begin{align}
           \max \;& t& \label{advincl1}\\
                  & t \leq  \sum_{e\in A}  \hat{c}_e y^{(i)}_e+  \sum_{e\in A}  y^{(i)}_e u_e & \forall i\in [m+1], \label{advincl2}\\
     &0\leq u_e\leq \Delta_e & \forall e\in A, \label{advincl3}\\
     & \sum_{e\in A} u_e\leq  \Gamma^c.&\label{advincl4}
\end{align}
Dualizing the inner problem, we get
 \begin{align*}
     \min \;&\sum_{ i\in [m+1]} \left(\sum_{e\in A}  \hat{c}_e y^{(i)}_e\right)\lambda_i + 
      \sum_{e\in A}   \Delta_e \gamma_e + \Gamma^c\Theta&\\
      &\sum_{ i\in [m+1]} \lambda_i=1,&\\
     &\gamma_e+\Theta\geq \sum_{ i\in [m+1]}  y^{(i)}_e \lambda_i  & \forall e\in A,\\
     & \lambda_i \geq 0 &\forall i\in [m+1],\\
     &\gamma_e \geq 0 &\forall e\in A,\\
     &\Theta\geq 0.&
\end{align*}
By the strong duality, \textsc{Rec Rob SP}  under~$\mathcal{U}(\Gamma^c)$ can be rewritten as
\begin{align}
     \min \;\;&\sum_{e\in A} C_e x_e+ \sum_{e\in A}  \hat{c}_e  \sum_{ i\in [m+1]}y^{(i)}_e\lambda_i + 
      \sum_{e\in A}   \Delta_e \gamma_e + \Gamma^c\Theta& \label{ncel1}\\
      &\sum_{ i\in [m+1]} \lambda_i=1,&\label{ncel2}\\
     &\gamma_e+\Theta\geq \sum_{ i\in [m+1]}  y^{(i)}_e \lambda_i  & \forall e\in A, \label{ncel3}\\
     & \lambda_i \geq 0 &\forall i\in [m+1],\\
     &\gamma_e \geq 0 &\forall e\in A,\\
     &\Theta\geq 0,&\\
      &\pmb{y}^{(i)}\in \chi(\Phi(\pmb{x},k)) & \forall i\in [m+1], \label{ncel7}\\
           &\pmb{x}\in \chi(\Phi).& \label{ncel8}
\end{align}
The constraint~(\ref{ncel8}) can be replaced with~(\ref{chx0})--(\ref{chx4}).
The constraints~(\ref{ncel7}) can be replaced with one of (\ref{cneighincl})--(\ref{cneighsym}), depending on the neighborhood used (see Section~\ref{secmiprec}). Finally, it suffices to linearize the terms $y_e^{(i)}\lambda_i$ that appear in~(\ref{ncel3}). This can be done by substituting $v_i=y_e^{(i)}\lambda_i$, $i\in [m+1]$, and adding the constraints $v_i\leq M y_e^{(i)}$, $v_i\geq \lambda_i-M(1-y_e^{(i)})$, $v_i\leq \lambda_i$, $v_i\geq 0$.

\end{proof}

The following theorem establishes the parameterized complexity hardness of~\textsc{Rec Rob SP}, not only when parameterized by~$k$ (see Theorem~\ref{thminclp}), but also
by~$\Gamma^c$.
\begin{thm} 
The decision version of~\textsc{Rec Rob SP}  under~$\mathcal{U}(\Gamma^c)$ 
  with~$\Phi^{\mathrm{incl}}(X,k)$ parameterized by $k$ and~$\Gamma^c$
is para-NP-complete in general digraphs.
\label{tincugc}
\end{thm}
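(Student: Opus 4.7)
The plan is to bootstrap from two results already in the paper, namely Theorem~\ref{tmcfuc} and Theorem~\ref{thminclp}. For membership in para-NP, it suffices to show that the unparameterized decision version of \textsc{Rec Rob SP} under $\mathcal{U}(\Gamma^c)$ with $\Phi^{\mathrm{incl}}(X,k)$ lies in NP; para-NP membership then follows for any parameterization, in particular by $(k,\Gamma^c)$. The certificate I have in mind is read off the proof of Theorem~\ref{tmcfuc}: after the standard linearization of the bilinear terms $(\sum_{e\in A} \hat{c}_e y^{(i)}_e)\lambda_i$ and $y^{(i)}_e \lambda_i$, the model~(\ref{ncel1})--(\ref{ncel8}) becomes a compact MIP whose only binary variables encode $\pmb{x}$ and the at most $m+1$ second-stage path vectors $\pmb{y}^{(i)}$. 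Guessing these $0/1$ values and solving the resulting polynomial-size LP yields a polynomial-time verifier, placing the problem in NP.

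For para-NP-hardness I would invoke~\cite[Theorem~2.14]{FG06}, which reduces the task to exhibiting NP-hardness at some fixed value of each parameter. The plan is to take $\Gamma^c = 0$, which forces $c_e^S = \hat{c}_e$ for every $e\in A$ and every $S\in\mathcal{U}(\Gamma^c)$, so the adversarial maximum in~(\ref{rrsp}) collapses and the robust problem reduces to \textsc{Rec SP} with second-stage costs $\overline{c}_e := \hat{c}_e$. By Theorem~\ref{thminclp}, \textsc{Rec SP} with $\Phi^{\mathrm{incl}}(X,k)$ is NP-hard already at $k=2$ via B{\"u}sing's reduction~\cite{B12}; hence \textsc{Rec Rob SP} under $\mathcal{U}(\Gamma^c)$ with $\Phi^{\mathrm{incl}}(X,k)$ is NP-hard at the constant parameter values $k=2$, $\Gamma^c=0$, yielding para-NP-hardness parameterized by $(k,\Gamma^c)$. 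Combined with membership this gives para-NP-completeness.

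No real obstacle stands in the way; the theorem is essentially a corollary of Theorems~\ref{tmcfuc} and~\ref{thminclp}. The only point worth stating explicitly is that freezing $\Gamma^c$ to the constant $0$ is a legitimate slice of the parameterized problem in the sense of~\cite{FG06}, so no genuinely parameterized reduction is required and the argument runs purely by specialization of the NP-hardness already established in the paper.
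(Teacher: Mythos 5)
Your proposal is correct, and the membership half is essentially the paper's own argument: the paper also certifies a ``yes'' instance by the first-stage path $X$ together with the at most $m+1$ recovery paths $Y^{(1)},\ldots,Y^{(m+1)}$ coming from the Carath\'eodory reformulation~(\ref{robrecequiv}), and evaluates the objective in polynomial time via the adversarial LP~(\ref{advincl1})--(\ref{advincl4}); your variant of guessing the binaries in the compact MIP of Theorem~\ref{tmcfuc} and solving the residual LP is the same certificate dressed differently. Where you diverge is the hardness half. The paper invokes the result of Nasrabadi and Orlin~\cite{NO13} that \textsc{Rec Rob SP} under $\mathcal{U}(\Gamma^c)$ with $\Phi^{\mathrm{incl}}(X,k)$ is already NP-hard at $k=\Gamma^c=1$, and then applies \cite[Theorem~2.14]{FG06}. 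You instead freeze $\Gamma^c=0$, observe that the scenario set collapses to the single nominal scenario so the problem degenerates to \textsc{Rec SP}, and import B\"using's $k=2$ hardness (Theorem~\ref{thminclp}). Both slices are legitimate and both yield para-NP-hardness; your route has the advantage of being self-contained within results already proved earlier in the paper, while the paper's choice of $k=\Gamma^c=1$ is the more informative slice, since it shows hardness persists even when the budget is genuinely active and the recovery parameter is as small as possible, rather than in the degenerate regime where the uncertainty set is a singleton. Your closing remark that fixing a parameter to a constant is a valid slice in the sense of~\cite{FG06} is exactly the point the paper also relies on, so there is no gap.
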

\begin{proof}
The problem~\textsc{Rec Rob SP}  is in  NP, since (\ref{robrecequiv}) shows that
$X\in \Phi$ and 
$Y^{(1)},\ldots, Y^{(m+1)} \in \Phi^{\mathrm{incl}}(X,k)$ is
a certificate for proving if an instance of~\textsc{Rec Rob SP}  is a Yes-instance and 
computing the value of $\sum_{e\in X} C_e + \max_{S\in \mathcal{U}(\Gamma^c)}
\min_{i\in [m+1]}
\sum_{e\in Y^{(i)}} c_e^S$ can be done in polynomial time by solving the
linear programming model (\ref{advincl1})--(\ref{advincl4}).
Thus, \textsc{Rec Rob  SP} is in para-NP.
Since \textsc{Rec Rob  SP} is NP-hard for constant~$k$ and~$\Gamma^c$ ($k=\Gamma^c=1$)~\cite{NO13},
 it is para-NP-hard.
\end{proof}
Theorem~\ref{tincugc} shows that \textsc{Rec Rob SP}  under~$\mathcal{U}(\Gamma^c)$ 
  with~$\Phi^{\mathrm{incl}}(X,k)$ parameterized by $k$ and~$\Gamma^c$
is para-NP-hard in general digraphs.

\subsection{The \textsc{Rec Rob SP}  problem under~$\mathcal{U}(\Gamma^d)$} 
It has been recently  proven in~\cite{JKZ24} that \textsc{Rec Rob SP} with scenario set $\mathcal{U}(\Gamma^d)$ and neighborhoods $\Phi^{\mathrm{excl}}(X,k)$, $\Phi^{\mathrm{sym}}(X,k)$ is $\Sigma^p_{3}$-hard
and  $\Sigma^p_{3}$-hard to approximate
 in general digraphs. This excludes a compact MIP formulation for the problem unless the polynomial-time 
 hierarchy collapses (see, e.g.,~\cite{WO21}).
 Construction of a compact MIP formulation for the neighborhood $\Phi^{\mathrm{incl}}(X,k)$, or for acyclic digraphs, is an interesting open problem.

It was shown in~\cite{NO13} that \textsc{Rec Rob SP}   with~$\Phi^{\mathrm{incl}}(X,k)$
under~$\mathcal{U}(\Gamma^d)$
 is strongly 
NP-hard and not approximable in general digraphs, even
if the recovery parameter~$k=1$ and the budget~$\Gamma^d=1$,  by
a reduction  from $K$-\textsc{V-DP},  where the number of terminal pairs~$K=2$, for a general digraph~\cite{FHW80}.
Since $K$-\textsc{V-DP} remains  strongly
NP-complete in an acyclic planar digraph if $K$ is unbounded~\cite{NS09},
we can easily obtain a reduction from $K$-\textsc{V-DP} in an acyclic planar digraph, 
which is a minor generalization of the one given in \cite{NO13}.
This allows us 
to show that \textsc{Rec Rob SP}  under~$\mathcal{U}(\Gamma^d)$ with~$\Phi^{\mathrm{incl}}(X,k)$
is strongly NP-hard and not approximable
if $k=\Gamma^d=1$ in digraphs with a simpler structure than general
digraphs, namely, nearly acyclic planar ones.
From the above, we immediately get that 
\textsc{Rec Rob SP}  under~$\mathcal{U}(\Gamma^d)$ with~$\Phi^{\mathrm{incl}}(X,k)$
is para-NP-hard in these simpler digraphs.

\subsection{Approximation algorithms}

In this section, we again study the \textsc{Rec Rob SP}  problem in acyclic multidigraphs. We have shown in Section~\ref{secinterv}  that this problem can be solved in polynomial time for the interval uncertainty set $\mathcal{U}$. We now use this fact to provide some approximation algorithms for the problem under scenario sets $\mathcal{U}(\Gamma^c)$ and $\mathcal{U}(\Gamma^d)$, whose computational complexity status is unknown in acyclic multidigraphs.  Let 
$$
F(X)= \left(\sum_{e\in X} C_e + \max_{S\in \Uset}\min_{Y\in \Phi(X,k)} \sum_{e\in Y} c_e^S\right),
$$
where $\mathbb{U}\in \{\mathcal{U}(\Gamma^c),\mathcal{U}(\Gamma^d)\}$ and $\Phi(X,k)$ is one of the neighborhoods~(\ref{incl})--(\ref{sym}).
Fix scenario $S\in \mathbb{U}$ and consider the problem
$$
\textsc{Rec SP $(S)$}:\; \min_{X\in \Phi, Y\in \Phi(X,k)} \left(\sum_{e\in X} C_e +  \sum_{e\in Y} c_e^S\right).
$$

The idea behind the construction of the approximation algorithms is to take into account the cost structure in scenario sets $\mathcal{U}(\Gamma^c)$ and $\mathcal{U}(\Gamma^d)$. Let $\alpha\geq (0,1]$ be a constant such that $\hat{c}_e\geq \alpha(\hat{c}_e+\Delta_e)$ for each $e\in A$. This condition means that the maximum second-stage cost of any arc $e\in A$ can be at most $\frac{1}{\alpha}$ greater than its nominal cost.   In practical applications, it can be unlikely that the value of $\alpha$ is very small as the maximum increase in the costs is typically limited. Notice that such a constant exists if $\hat{c}_e>0$ for each $e\in A$. The following result has been established in~\cite{HKZ16a} for the recoverable robust spanning tree problem (its proof for \textsc{Rec Rob SP} is almost verbatim):
\begin{lem}{\cite[Lemma 6]{HKZ16a}}
\label{lemappr1}
Suppose that $\hat{c}_e\geq \alpha (\hat{c}_e+\Delta_e)$ for each $e\in A$, where $\alpha \in (0,1]$, and let $(\hat{X},\hat{Y})$ be an optimal solution to \textsc{Rec SP $(S)$} for $S=(\hat{c}_e)_{e\in A}$. Then $F(\hat{X})\leq \frac{1}{\alpha}F(X)$ for each $X\in \Phi$.
\end{lem}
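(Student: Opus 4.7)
The plan is to argue in two stages: first upper-bound $F(\hat{X})$ by a nominal-scenario quantity using the hypothesis $\hat{c}_e \geq \alpha(\hat{c}_e + \Delta_e)$, and then lower-bound $F(X)$ by the same quantity using that the nominal scenario belongs to $\Uset$.

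The key observation is that for every $S = (c_e^S)_{e\in A} \in \Uset \subseteq \mathcal{U}$ and every arc $e$,
\[
c_e^S \;\leq\; \hat{c}_e + \Delta_e \;\leq\; \tfrac{1}{\alpha}\,\hat{c}_e.
\]
Summing along any path $Y$, this yields $\sum_{e\in Y} c_e^S \leq \tfrac{1}{\alpha}\sum_{e\in Y}\hat{c}_e$, so after taking $\min_{Y\in\Phi(\hat{X},k)}$ on both sides and then $\max_{S\in\Uset}$ on the left,
\[
\max_{S\in\Uset}\min_{Y\in\Phi(\hat{X},k)}\sum_{e\in Y} c_e^S \;\leq\; \tfrac{1}{\alpha}\min_{Y\in\Phi(\hat{X},k)}\sum_{e\in Y}\hat{c}_e.
\]
Combining this with $\sum_{e\in \hat{X}} C_e \leq \tfrac{1}{\alpha}\sum_{e\in \hat{X}} C_e$ (valid because $C_e\geq 0$ and $\alpha\leq 1$) and using that $(\hat{X},\hat{Y})$ is optimal for \textsc{Rec SP}$(\hat{S})$ with $\hat{S}=(\hat{c}_e)_{e\in A}$, I obtain, for every $X\in\Phi$ and every $Y\in\Phi(X,k)$,
\[
F(\hat{X}) \;\leq\; \tfrac{1}{\alpha}\!\left(\sum_{e\in \hat{X}}C_e + \sum_{e\in \hat{Y}}\hat{c}_e\right) \;\leq\; \tfrac{1}{\alpha}\!\left(\sum_{e\in X}C_e + \sum_{e\in Y}\hat{c}_e\right).
\]
Minimizing the right-hand side over $Y\in\Phi(X,k)$ gives $F(\hat{X}) \leq \tfrac{1}{\alpha}\bigl(\sum_{e\in X}C_e + \min_{Y\in\Phi(X,k)}\sum_{e\in Y}\hat{c}_e\bigr)$.

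To close the argument I would note that $\hat{S}\in\Uset$ for both $\Uset=\mathcal{U}(\Gamma^c)$ (since $\sum_e(\hat{c}_e-\hat{c}_e)=0\leq\Gamma^c$) and $\Uset=\mathcal{U}(\Gamma^d)$ (since no coordinate exceeds its nominal value). Hence the inner maximization in $F(X)$ can only increase the cost evaluated at $\hat{S}$, i.e.\
\[
\min_{Y\in\Phi(X,k)}\sum_{e\in Y}\hat{c}_e \;\leq\; \max_{S\in\Uset}\min_{Y\in\Phi(X,k)}\sum_{e\in Y} c_e^S,
\]
which yields $\sum_{e\in X}C_e + \min_{Y\in\Phi(X,k)}\sum_{e\in Y}\hat{c}_e \leq F(X)$, and therefore $F(\hat{X})\leq\tfrac{1}{\alpha}F(X)$, as required. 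There is no real obstacle here; the only subtlety is verifying that the nominal scenario lies in the budgeted set $\Uset$, which is immediate for both $\mathcal{U}(\Gamma^c)$ and $\mathcal{U}(\Gamma^d)$.
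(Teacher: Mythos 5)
Your proof is correct, and it is essentially the standard sandwiching argument that the paper invokes by reference to the analogous spanning-tree lemma in~\cite{HKZ16a}: bound every scenario cost by $\frac{1}{\alpha}$ times the nominal cost, use optimality of $(\hat{X},\hat{Y})$ under the nominal scenario, and lower-bound $F(X)$ via membership of the nominal scenario in $\Uset$. The only hypothesis you rely on beyond the lemma's statement is $C_e\geq 0$, which is part of the paper's problem definition, so there is no gap.
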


Lemma~\ref{lemappr1} is true for both uncertainty sets $\mathcal{U}(\Gamma^d)$ and $\mathcal{U}(\Gamma^d)$. Let us now focus only on $\mathcal{U}(\Gamma^c)$.
Let $D=\sum_{e\in A} \Delta_e>0$ be the total deviation of the second-stage arc costs from their nominal values. The value of $D$ can be seen as a total uncertainty in the second-stage arc costs. Define scenario $S'$ in which $c_e^{S'}=\min\{\hat{c}_e+\Delta_e, \hat{c}_e+\Gamma^c\frac{\Delta_e}{D}\}$, $e\in A$. It is easy to see that $S'\in \mathcal{U}(\Gamma^c)$, since $\sum_{e\in A} (c_e^{S'}-\hat{c}_e)\leq \sum_{e\in A} \Gamma^c \frac{\Delta_e}{D}=\Gamma^c$. Let $(\hat{X},\hat{Y})$ be an optimal solution to \textsc{Rec SP $(S')$}. Suppose that we  can find two constants $\beta\in(0,1]$ and $\gamma\in [0,1)$ such that $\beta D\leq \Gamma^c\leq \gamma F(\hat{X})$. The constants $\beta$ and $\gamma$ relate the budget $\Gamma^c$ to the total deviation $D$ and the total cost of the heuristic solution $\hat{X}$.  The following lemma has been proven in~\cite{HKZ16a} for the recoverable robust spanning tree problem (again, its proof for \textsc{Rec Rob SP} is almost verbatim):
\begin{lem}{\cite[Lemma 7]{HKZ16a}}
\label{lemappr2}
  Assume that $\mathbb{U}=\mathcal{U}(\Gamma^c)$ and let $(\hat{X},\hat{Y})$ be an optimal solution to \textsc{Rec~SP~($S'$)}. Then, the following implications are true:
  \begin{enumerate}
    \item[(a)] if $\Gamma^c\geq \beta D$ for $\beta\in (0,1]$, then $F(\hat{X})\leq \frac{1}{\beta}F(X)$ for each $X\in \Phi$,
    \item[(b)] if $\Gamma^c \leq \gamma F(\hat{X})$ for $\gamma \in [0,1)$, then  $F(\hat{X})\leq \frac{1}{1-\gamma} F(X)$ for each $X\in \Phi$.
  \end{enumerate}
\end{lem}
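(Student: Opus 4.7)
The plan is to derive both inequalities by comparing $F(X)$ with the auxiliary objective $F_{S'}(X)=\sum_{e\in X}C_e+\min_{Y\in\Phi(X,k)}\sum_{e\in Y}c_e^{S'}$ of \textsc{Rec SP}$(S')$, which is the quantity minimised by $\hat{X}$. Two generic facts will be used throughout: since $S'\in\mathcal{U}(\Gamma^c)$, we have $F_{S'}(X)\leq F(X)$ for every $X\in\Phi$; and by optimality of $\hat{X}$, $F_{S'}(\hat{X})\leq F_{S'}(X)$ for every $X\in\Phi$. It is also convenient to introduce the objective under the unbudgeted interval set, $F_{\max}(X)=\sum_{e\in X}C_e+\min_{Y\in\Phi(X,k)}\sum_{e\in Y}\overline{c}_e$, for which $F(X)\leq F_{\max}(X)$ since $\mathcal{U}(\Gamma^c)\subseteq\mathcal{U}$.

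For part (a) (read with the intended approximation ratio $F(\hat{X})\leq \tfrac{1}{\beta}F(X)$), the central estimate is the per-arc inequality $c_e^{S'}\geq \beta\overline{c}_e$. From the definition of $S'$ and the hypothesis $\Gamma^c\geq \beta D$ one gets $c_e^{S'}\geq \hat{c}_e+\beta\Delta_e$, and then $\hat{c}_e+\beta\Delta_e\geq \beta(\hat{c}_e+\Delta_e)=\beta\overline{c}_e$ because $\hat{c}_e\geq 0$ and $\beta\leq 1$. Summing over any $Y\in\Phi(X,k)$, and using $\sum_{e\in X}C_e\geq \beta\sum_{e\in X}C_e$ on the first-stage term, yields $F_{S'}(X)\geq \beta F_{\max}(X)$ for every $X\in\Phi$. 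Chaining with the two generic facts,
\[
F(\hat{X})\leq F_{\max}(\hat{X})\leq \tfrac{1}{\beta}F_{S'}(\hat{X})\leq \tfrac{1}{\beta}F_{S'}(X)\leq \tfrac{1}{\beta}F(X).
\]

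For part (b), the key step is the uniform bound $F(X)\leq F_{S'}(X)+\Gamma^c$. This follows from the pointwise estimate
\[
\sum_{e\in Y}(c_e^S-c_e^{S'})\leq \sum_{e\in Y}(c_e^S-\hat{c}_e)\leq \sum_{e\in A}(c_e^S-\hat{c}_e)\leq \Gamma^c,
\]
valid for every $S\in\mathcal{U}(\Gamma^c)$ and every $Y\in\Phi$, where the first inequality uses $c_e^{S'}\geq \hat{c}_e$, the second uses $c_e^S\geq \hat{c}_e$, and the third is the budget constraint defining $\mathcal{U}(\Gamma^c)$. Taking the minimum over $Y$ and then the maximum over $S$ gives the claimed uniform bound. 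Combining with the two generic facts and the hypothesis $\Gamma^c\leq \gamma F(\hat{X})$,
\[
F(\hat{X})\leq F_{S'}(\hat{X})+\Gamma^c\leq F_{S'}(X)+\Gamma^c\leq F(X)+\gamma F(\hat{X}),
\]
and rearranging gives $F(\hat{X})\leq \tfrac{1}{1-\gamma}F(X)$.

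Neither direction presents a genuine obstacle once $F_{S'}$ and $F_{\max}$ are put in place and the two generic facts are isolated; the only spot that needs a short line of care is the per-arc inequality $c_e^{S'}\geq \beta\overline{c}_e$ in part (a), which crucially depends on both $\beta\leq 1$ and $\hat{c}_e\geq 0$. Everything else is routine $\min$/$\max$ monotonicity, mirroring the arguments used for recoverable spanning tree in~\cite{HKZ16a}.
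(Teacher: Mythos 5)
Your proof is correct and follows essentially the same route as the paper, which simply defers to the analogous spanning-tree lemmas in~\cite{HKZ16a}: compare $F$ with the surrogate objectives $F_{S'}$ and $F_{\max}$, use $S'\in\mathcal{U}(\Gamma^c)$ together with the optimality of $\hat{X}$ for \textsc{Rec SP}$(S')$, plus the per-arc bound $c_e^{S'}\geq\beta\overline{c}_e$ in (a) and the uniform bound $F(X)\leq F_{S'}(X)+\Gamma^c$ in (b). You were also right to read (a) with the ratio $\tfrac{1}{\beta}$; the stated $F(\hat{X})\leq\beta F(X)$ is evidently a typo, as the subsequent theorem's $\tfrac{1}{\beta}$ approximation guarantee confirms.
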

Observe that \textsc{Rec SP (S)} can be solved in polynomial time in acyclic multidigraphs, according to the results shown in Section~\ref{secinterv} (it is equivalent to \textsc{Rec SP} with $\overline{c}_e=c^S_e$, $e\in A$).
The value of $F(X)$ for a given $X\in \Phi$ 
can be computed in polynomial time for scenario set $\mathcal{U}(\Gamma^c)$ and 
the neighborhood~$\Phi^{\mathrm{incl}}(X,k)$, see~\cite{NO13}, where the time-expanded network approach~\cite{AMO93}
was applied. The same technique can be applied to computing 
the value of $F(X)$  for~$\mathcal{U}(\Gamma^c)$ and 
the neighborhoods~$\Phi^{\mathrm{excl}}(X,k)$ and~$\Phi^{\mathrm{sym}}(X,k)$ in polynomial time
in acyclic multidigraphs.
From this, it follows that one can
 verify the condition~$(b)$ from Lemma~\ref{lemappr2} in polynomial time. 
Therefore, Lemma~\ref{lemappr1} and Lemma~\ref{lemappr2} imply the following approximation results, where $\alpha$, $\beta$ and $\gamma$ are the constants from these lemmas:
\begin{cor}
  The \textsc{Rec Rob SP}  in acyclic multidigraphs  is approximable within $\frac{1}{\alpha}$ for scenario set $\mathcal{U}(\Gamma^d)$ and within $\min \{\frac{1}{\alpha}, \frac{1}{\beta},\frac{1}{1-\gamma}\}$ for scenario set $\mathcal{U}(\Gamma^c)$.
\end{cor}

\section{Conclusion}

In this paper, we have discussed the recoverable robust shortest path problem under various interval scenario sets and neighborhoods. The main results are polynomial time algorithms for acyclic multidigraphs and the traditional interval uncertainty. In addition, we have strengthened the known hardness results for general multidigraphs and provided some new results for the budgeted interval uncertainty. In particular, we have proposed a compact MIP formulation for the continuous budgeted uncertainty and some approximation algorithms for acyclic multidigraphs.
There are still several open questions regarding the problem being considered. Perhaps the most interesting one is the characterization of the complexity of the recoverable robust problem in acyclic multidigraphs under budgeted uncertainty. The problem is known to be hard only for general multidigraphs. A polynomial time algorithm may exist in general acyclic multidigraphs or for some of their special cases (layered, arc series-parallel, etc.). 
It is also interesting to provide a more detailed characterization of the problem from the parameterized complexity point of view.

\subsubsection*{Acknowledgements}
The authors were supported by
 the National Science Centre, Poland, grant 2022/45/B/HS4/00355.


\appendix

\section{Appendix}
\label{dod}

We show two examples that demonstrate that adding the anti-cycling constraints (\ref{chx1})--(\ref{chx4}) to the MIP formulation is necessary, even if all first and second-stage arc costs are positive.

\begin{figure}[ht]
\includegraphics[width=15.5cm]{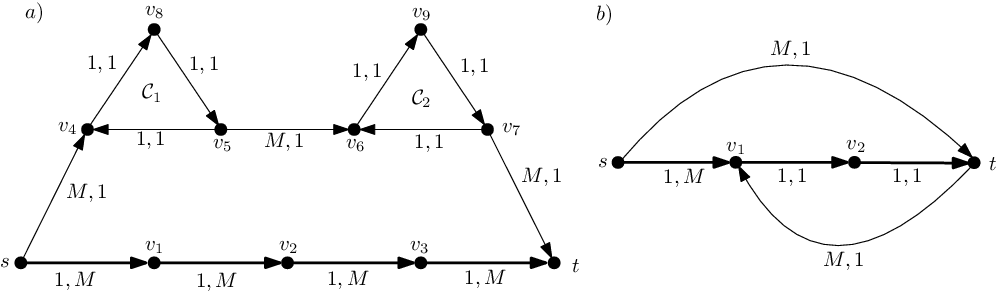}
\caption{Two sample graphs~$G=(V,A)$ with the first and second-stage arc costs
$C_e,\overline{c}_e$, $e\in A$, respectively, where
 $M$ is a big constant. The paths in bold are the optimal first-stage paths.} \label{figex1}
\end{figure}

Consider the sample digraph shown in Figure~\ref{figex1}a. Choose the arc inclusion neighborhood and the recovery parameter $k=3$. The total cost of any solution to this problem is at least $M$. Indeed, to achieve the cost less than $M$ we have to choose the path $X=\{(s,v_1),(v_1,v_2),(v_2,v_3),(v_3,t)\}$ as the first-stage path. However, achieving the cost of the second-stage path less than $M$ requires including more than~3 new arcs. Suppose that we remove the anti-cycling constraints (\ref{chx1})--(\ref{chx4}) from the description of $\chi(\Phi)$. Then $X'=X\cup \mathcal{C}_1\cup \mathcal{C}_2$, where 
$\mathcal{C}_1$ and $\mathcal{C}_2$ are directed cycles depicted in Figure~\ref{figex1}a, is a feasible first-stage solution to the MIP formulation. Its cost equals 10, which is less than $M$. We can now obtain a feasible second-stage solution $Y=\{(s,v_4),(v_4,v_8),(v_8,v_5), (v_5,v_6), (v_6,v_9), (v_9,v_7), (v_7,t)\}$ with the cost equal to~7 (observe that only 3 new arcs are added to $X'$). The total cost of the solution obtained is 17, which is less than $M$. Thus,
this example demonstrates that we have to add the anti-cycling constraints to the description of $\chi(\Phi)$.

Consider the sample digraph shown in Figure~\ref{figex1}b. Choose the arc-exclusion neighborhood and the recovery parameter $k=1$. The total cost of any solution to this problem is at least $M$. Indeed, to achieve a less cost we have to chose $X=\{(s,v_1), (v_1,v_2), (v_2,t)\}$ as the first-stage path. The cost of this path equals~3. However, the cost of the second-stage path $Y$ is then at least $M$ because each simple $s$-$t$ path in the neighborhood of $X$ must contain the arc $(s,v_1)$. If we remove the anti-cycling constraints from the description of $\chi(\Phi^{\rm excl}(\pmb{x},k))$, then we can choose $Y=\{(s,t), (t,v_1), (v_1,v_2), (v_2,v_1), (v_1,t)\}$, which is now feasible and has the cost equal to~4.  Observe that the path $Y$ is not simple. Similar reasoning applies to the arc symmetric difference neighborhood. It is enough to set the recovery parameter $k=3$. This example shows that we have to add the anti-cycling constraints to the description of $\chi(\Phi^{\rm excl}(\pmb{x},k))$ and $\chi(\Phi^{\rm sym}(\pmb{x},k))$.

\end{document}